\newtheorem{thm}{Theorem}[section]
\newtheorem{prop}[thm]{Proposition}
\newtheorem{lem}[thm]{Lemma}
\newtheorem{exam}{Example}
\newtheorem{defi}[thm]{Definition}
\newtheorem{rem}[thm]{Remark}
\numberwithin{equation}{section}
\begin{document}

\title{Linear complementary pairs of codes over a finite non-commutative Frobenius ring}
\author{Sanjit Bhowmick$^1$, Xiusheng Liu$^{2}$\footnote{Corresponding author.}
\setcounter{footnote}{-1}
\footnote{E-Mail addresses:
 sanjitbhowmick@niser.ac.in  (S. Bhowmick),
 lxs6682@163.com (X. Liu)}}
\date{\small
1. School of Mathematical Sciences, National Institute of Science Education and Research,\\
An OCC of Homi Bhabha National Institute,
Bhubaneswar, Odisha 752050, India\\
2. School of Science and Technology,\\ College of Arts and Science of Hubei Normal University, Huangshi, Hubei 435109, China}
\maketitle

\begin{abstract}
In this paper, we study linear complementary pairs (LCP) of codes over finite non-commutative local rings. We further provide a necessary and sufficient condition for a pair of codes $(C,D)$ to be LCP of codes over finite non-commutative Frobenius rings. The minimum distances $d(C)$ and $d(D^\perp)$ are defined as the security parameter for an LCP of codes $(C, D).$ It was recently demonstrated that if $C$ and $D$ are both $2$-sided LCP of group codes over a finite commutative Frobenius rings, $D^\perp$ and $C$ are permutation equivalent in \cite{LL23}. As a result, the security parameter for a $2$-sided group LCP $(C, D)$ of codes is simply $d(C)$. Towards this, we deliver an elementary proof of the fact that for a linear complementary pair of codes $(C,D)$, where $C$ and $D$ are linear codes over finite non-commutative Frobenius rings, under certain conditions, the dual code $D^\perp$ is equivalent to $C.$
\end{abstract}

\bf Key Words\rm : Finite non-commutative Frobenius ring, Complement submodule, Essential submodule, Injective hull.

\bf Mathematics Subject Classification\rm : 51E22; 94B05.

\section{Introduction}
Linear complementary pairs (LCP) of codes are extensively explored because of their unique algebraic structure and wide application in cryptography. This concept was first introduced by Bhasin et al., in \cite{Bhasin2015}.
LCP of codes over a finite field is further studied in \cite{Bringer2014} and \cite{Carlet2018}. Hu et al. developed these codes over finite chain rings in \cite{Liu2021}. Recently, Liu et al., \cite{LL23} extended the same results over finite commutative Frobenius rings. In particular, if $D=C^\perp$ over finite commutative Frobenius rings, LCP is reduced to LCD, studied in \cite{Bhowmick}. This paper will develop an algebraic structure of LCP of codes over finite non-commutative Frobenius rings. \\
On the other hand, Carlet et al., \cite{Carlet2018} showed that if the pair $(C,D)$ is LCP, where $C$ and $D$ are both cyclic codes over a finite field, then $C$ and $D^\perp$ are equivalent. They further showed that if the length of the codes is relatively prime to the characteristic of the finite field and $C$ and $D$ are $2D$ cyclic codes, then $C$ and $D^\perp$ are equivalent. Later, Guneri et al., \cite{Gu2018} extended the same results for linear codes $C$ and $D$, which are $mD$ cyclic codes, where $m\in\mathbb{N}.$\\
Without any restriction on the order of the group with a characteristic of a finite field, in \cite{Bor1}, Borello et al. established that if $C$ and $D$ both are group codes and the pair $(C,D)$ is an LCP of group codes, then $C$ is permutation equivalent to $D^\perp$. Further, Guneri et al., \cite{cem20} extended those results for LCP of group codes over finite commutative chain rings. Using the same method of \cite{cem20}, Liu et al. established the same results for LCP of group codes over finite commutative Frobenius rings in \cite{LL23}. They also established that $C$ and $D^\perp$ are equivalent if the pair $(C,D)$ are an LCP of group codes. In this paper, we will show that  $C$ and $D^\perp$ are equivalent if the pair $(C,D)$ are an LCP of codes over a finite non-commutative Frobenius ring under a certain condition. The same technique will be held for LCP of codes over finite fields and an LCP of codes over finite commutative Frobenius rings\\
The paper is organized as follows. In Section 2, we recall background materials on non-commutative Frobenius rings and linear codes over a finite non-commutative ring. In Section 3, we study LCP codes over finite non-commutative local rings. Furthermore, we develop LCP codes over a finite non-commutative Frobenius ring in Section $4.$ Finally, we obtain if $C$ and $D$ are two linear codes over finite non-commutative Frobenius rings, then $C$ is equivalent to $D^\perp$ under certain conditions in Section $5.$

\section{Some preliminaries}
    In this section, we will state some basic definitions and results that are needed to derive our main results. Throughout this paper, we will assume all rings to be non-commutative unless mentioned otherwise. Now let  $R$ denote a finite ring with unity $1_R.$   A right $R$-module   $A$ is said to be free if it is isomorphic to the right $R$-module $R^{t}$ for some positive integer $t.$ A right $R$-module $P$ is said to be projective if there exists another right $R$-module $Q$ such that $P\oplus Q$ is a free right $R$-module. Further, a right $R$-module $I$ is said to be injective if  for any monomorphism $g:A\rightarrow B$ of right $R$-modules and any $R$-module homomorphism $h:A\rightarrow I$, there exists an $R$-module homomorphism $h':B\rightarrow I$ such that $h=h'\circ g$. Now, the following results are well-known.
\begin{prop}\cite[Prop. 36]{Dum04}\label{th-0.0001}
A right $R$-module $I$ is injective if and only if for any right ideal, $P$ of $R$, any $R$-module homomorphism $f:P\rightarrow I$ can be extended to the $R$-module homomorphism $f':R\rightarrow I$.
\end{prop}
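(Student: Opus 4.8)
The plan is to recognize this statement as Baer's criterion for injectivity and to prove it in both directions, with the backward implication carrying essentially all of the content. First I would dispose of the forward direction: if $I$ is injective, then because the inclusion $\iota:P\hookrightarrow R$ is a monomorphism of right $R$-modules, the defining property of injectivity (applied to $\iota$ and to the given $f:P\to I$) immediately yields a homomorphism $f':R\to I$ with $f=f'\circ\iota$, which is exactly the asserted extension.

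For the converse I would assume the stated extension property for right ideals and verify the full injectivity condition. Given a monomorphism $g:A\to B$ and a homomorphism $h:A\to I$, I would identify $A$ with its image $g(A)\subseteq B$ and consider the poset $\Sigma$ of pairs $(A',h')$, where $A\subseteq A'\subseteq B$ is a submodule and $h':A'\to I$ restricts to $h$ on $A$, ordered by extension. Since $(A,h)\in\Sigma$ the poset is nonempty, and the union of the domains of any chain together with the obvious amalgamated map furnishes an upper bound, so Zorn's lemma supplies a maximal element $(A_0,h_0)$.

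The crux is to show $A_0=B$. Arguing by contradiction, I would pick $b\in B\setminus A_0$ and form the right ideal $P=\{\,r\in R : br\in A_0\,\}$. The map $\phi:P\to I$ given by $\phi(r)=h_0(br)$ is then a right $R$-module homomorphism, so by hypothesis it extends to some $\psi:R\to I$; writing $i_0=\psi(1_R)$ one has $\psi(r)=i_0 r$ for every $r\in R$ by right-linearity. I would then define $h_1:A_0+bR\to I$ by $h_1(a+br)=h_0(a)+i_0 r$, and this $h_1$ extends $h_0$ to the strictly larger submodule $A_0+bR$, contradicting the maximality of $(A_0,h_0)$ and thereby forcing $A_0=B$, so that $h_0:B\to I$ is the desired extension.

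The main obstacle is establishing that $h_1$ is well defined and $R$-linear in the non-commutative setting, since an element of $A_0+bR$ may be written in several ways. If $a+br=a'+br'$, then $b(r'-r)=a-a'\in A_0$, hence $r'-r\in P$ and $h_0(a)-h_0(a')=h_0\!\big(b(r'-r)\big)=\phi(r'-r)=i_0(r'-r)$, which rearranges to $h_0(a)+i_0 r=h_0(a')+i_0 r'$; the right-linearity of $h_1$ then follows from that of $h_0$ together with $\psi(rs)=\psi(r)s$. The point to watch is that all scalars remain on the right throughout, so that no commutativity of $R$ is ever invoked and the classical argument carries over verbatim to a non-commutative $R$.
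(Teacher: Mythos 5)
Your proof is correct, and it is exactly the classical argument for Baer's criterion: the paper itself offers no proof of this proposition (it is quoted from \cite[Prop.~36]{Dum04}), and your route --- Zorn's lemma on the poset of partial extensions, followed by extending over the right ideal $P=\{r\in R : br\in A_0\}$ to enlarge a maximal extension and reach a contradiction --- is the same one found in the cited source, with the well-definedness and right-linearity of $h_1$ checked correctly so that no commutativity of $R$ is used.
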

 \begin{prop}\cite[Prop. 2.5 and 3.4]{Lam19}\label{th-1.001} Let $P=\bigoplus\limits_{i=1}^{n} P_i$ be a direct sum of right $R$-modules $P_1,P_2,\cdots,P_n.$ The following hold.
\begin{itemize}
\item[(a)] The right $R$-module $P$ is projective if and only if each $P_i$ is a projective right $R$-module for $1 \leq i \leq n.$
\item[(b)] The right $R$-module $P$ is injective if and only if each $P_i$ is an injective right $R$-module for $1 \leq i \leq n.$
\end{itemize}
\end{prop}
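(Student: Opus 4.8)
The plan is to treat the two statements in parallel, since each reduces to two dual facts: a direct summand of a projective (resp.\ injective) module is projective (resp.\ injective), and a \emph{finite} direct sum of projective (resp.\ injective) modules is again projective (resp.\ injective). Throughout I would fix the canonical inclusions $\iota_i\colon P_i\to P$ and projections $p_i\colon P\to P_i$ associated with the decomposition $P=\bigoplus_{i=1}^n P_i$, which satisfy $p_j\iota_i=\delta_{ij}\,\mathrm{id}$ and $\sum_{i=1}^n \iota_i p_i=\mathrm{id}_P$; these relations are what let me transfer homomorphisms back and forth between $P$ and its summands.

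For part (a) I would use the characterization of a projective module as a direct summand of a free module. For the ``only if'' direction, assume $P$ is projective, so $P\oplus Q$ is free for some $Q$; then $P_i\oplus\bigl(Q\oplus\bigoplus_{j\neq i}P_j\bigr)$ is isomorphic to $P\oplus Q$ and hence free, so each $P_i$ is projective. For the ``if'' direction, assume each $P_i$ is projective and choose $Q_i$ with $P_i\oplus Q_i$ free; taking the finite direct sum gives $P\oplus\bigl(\bigoplus_{i=1}^n Q_i\bigr)\cong\bigoplus_{i=1}^n(P_i\oplus Q_i)$, a finite direct sum of free modules and therefore free, so $P$ is projective. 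Equivalently one can argue diagrammatically: given a surjection $\pi\colon B\twoheadrightarrow C$ and $f\colon P\to C$, lift each $f\iota_i$ through $\pi$ using projectivity of $P_i$ and assemble the lifts via the universal property of the direct sum.

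For part (b) I would work directly from the extension definition of injectivity recalled above. For the ``only if'' direction, suppose $P$ is injective and let $g\colon A\to B$ be a monomorphism and $h\colon A\to P_i$ any homomorphism. Applying injectivity of $P$ to $\iota_i h\colon A\to P$ yields an extension $h'\colon B\to P$ with $h'g=\iota_i h$, and then $p_i h'\colon B\to P_i$ satisfies $(p_i h')g=p_i\iota_i h=h$, so $P_i$ is injective. For the ``if'' direction, suppose each $P_i$ is injective and let $g\colon A\to B$ be a monomorphism and $h\colon A\to P$. For each $i$ the composite $p_i h\colon A\to P_i$ extends to some $h_i'\colon B\to P_i$ with $h_i'g=p_i h$; I then set $h'=\sum_{i=1}^n \iota_i h_i'\colon B\to P$ and verify componentwise that $p_j(h'g)=h_j'g=p_j h$ for every $j$, whence $h'g=h$, so $P$ is injective.

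The only genuine subtlety, and the point I would flag, lies in the ``if'' direction of part (b): the assembled extension $h'=\sum_{i=1}^n \iota_i h_i'$ is a \emph{finite} sum, so it is well defined and lands in the direct sum. This finiteness is essential, since over a non-Noetherian ring an infinite direct sum of injectives need not be injective (only arbitrary direct products are guaranteed injective), and it is exactly here that the hypothesis $n<\infty$ is used; for the finite ring $R$ under consideration the condition is automatic. If one prefers to minimise diagram chasing, the extension statements can instead be deduced from the exactness properties of $\mathrm{Hom}$ together with the natural isomorphism $\mathrm{Hom}(\bigoplus_i P_i,-)\cong\prod_i\mathrm{Hom}(P_i,-)$ and its dual, but the direct argument above is self-contained and relies only on the definitions already recalled, and if desired on Baer's criterion in Proposition~\ref{th-0.0001}.
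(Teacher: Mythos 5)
Your proof is correct and complete: part (a) via the direct-summand-of-free characterization and part (b) via a componentwise extension argument using the inclusions $\iota_i$ and projections $p_i$, with the verification $p_j(h'g)=p_jh$ for all $j$ legitimately forcing $h'g=h$ since a finite direct sum coincides with the product. There is nothing in the paper to compare against: the proposition is stated there without proof, quoted directly from \cite{Lam19} (Props.\ 2.5 and 3.4), and your argument is essentially the standard one given in that source. Your closing remark correctly isolates where finiteness of $n$ enters in (b) --- for an infinite index set the assembled map $\sum_i \iota_i h_i'$ would only land in the product, and by the Bass--Papp theorem infinite direct sums of injectives can indeed fail to be injective over non-Noetherian rings --- which is a genuine subtlety worth flagging, even though it is vacuous here since $R$ is finite.
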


The ring $R$ is said to be right (\textit{resp.} left)  self-injective if it is injective over itself as a right (\textit{resp.} left) $R$-module. Further, the finite ring $R$ is said to be Frobenius if it is right and left self-injective. Equivalently, the finite ring $R$ is said to be Frobenius if the right $R$-module $\dfrac{R}{J(R)}$ is isomorphic to $\textit{Soc}(R)$ as a right $R$-module, where $\texttt{J}(R)$ denotes the Jacobson radical of $R$ (\textit{i.e.,} the intersection of all maximal ideals of $R$)  and  $Soc(R)$ denotes the Socle of $R$ (\textit{i.e.,}  the sum of all irreducible ideals of $R$). For more details, see  \cite{Lam19}. Note that $\texttt{J}(R)$ is a $2$-sided ideal of $R.$

\begin{prop}\cite[Th. 15.9]{Lam19}\label{th-0.02} The following two statements are equivalent:
\begin{itemize}
\item[(a)] The ring $R$ is Frobenius.
\item[(b)] Every right ideal of $R$ is a projective right $R$-module if and only if it is an injective right $R$-module.
\end{itemize}
\end{prop}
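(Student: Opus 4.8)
The plan is to prove the two implications separately, using throughout that a finite ring is both left and right Artinian and Noetherian, so that every right ideal is finitely generated and finite direct sums are well behaved with respect to injectivity via Proposition~\ref{th-1.001}(b).

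For (a)$\Rightarrow$(b), I would assume $R$ is Frobenius, so that $R_R$ is injective, and fix a right ideal $P$. If $P$ is projective, then by Proposition~\ref{th-1.001}(a) there is a right $R$-module $Q$ with $P\oplus Q\cong R^{n}$ for some $n$; since $R_R$ is injective, $R^{n}$ is injective by Proposition~\ref{th-1.001}(b), and then the direct summand $P$ is injective by the same proposition. For the converse, suppose $P$ is injective. Here I would exploit that a Frobenius ring is a cogenerator: the condition $R/J(R)\cong\mathrm{Soc}(R)$ forces every simple right $R$-module to embed into $\mathrm{Soc}(R_R)$, and hence every finite right $R$-module embeds into some free module $R^{n}$. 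Applying this to $P$ yields a monomorphism $P\hookrightarrow R^{n}$; because $P$ is injective this monomorphism splits, so $P$ is a direct summand of the free module $R^{n}$ and is therefore projective by Proposition~\ref{th-1.001}(a).

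For (b)$\Rightarrow$(a), I would take $P=R_R$, which is free and hence projective; by hypothesis it is then injective, so $R$ is right self-injective. To upgrade this to the two-sided Frobenius property I would invoke the classical fact that a right Artinian, right self-injective ring is quasi-Frobenius, hence also left self-injective; equivalently, I would try to verify the socle identity $R/J(R)\cong\mathrm{Soc}(R)$ directly from right self-injectivity together with finiteness. Either route then gives that $R$ is Frobenius.

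The main obstacle is the implication ``injective $\Rightarrow$ projective'' inside (a)$\Rightarrow$(b), which is precisely where the full strength of the Frobenius hypothesis is needed: one must establish that $R_R$ is an injective cogenerator so that every finite module embeds into a free one, a step that goes beyond the formal bookkeeping of Propositions~\ref{th-0.0001} and~\ref{th-1.001}. The parallel difficulty in (b)$\Rightarrow$(a) is the passage from one-sided to two-sided self-injectivity, which again rests on the Artinian-plus-self-injective-implies-quasi-Frobenius theorem rather than on Baer's criterion alone.
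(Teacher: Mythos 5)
The paper never proves this proposition---it is quoted directly from Lam \cite[Th.~15.9]{Lam19} as background material---so there is no internal argument to compare yours against; I can only judge your proof on its own terms. Its skeleton is sound, but you have made the one direction you call the ``main obstacle'' far harder than it is. In (a)$\Rightarrow$(b), for ``injective $\Rightarrow$ projective'' you invoke the cogenerator property of Frobenius rings to embed $P$ into a free module; but the module $P$ in statement (b) is a \emph{right ideal}, so the monomorphism $P\hookrightarrow R_R$ is already given. If $P$ is injective, the identity map $P\to P$ extends along this inclusion to a retraction $R\to P$, so $P$ is a direct summand of the free module $R_R$ and hence projective---no Frobenius hypothesis, no socle identity, no embedding theorem. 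The cogenerator argument (whose ``hence every finite module embeds in a free module'' step itself hides an essential-socle and injective-hull argument) is the genuine obstacle only for Lam's original theorem, which quantifies over \emph{all} modules rather than right ideals; for the statement as written here it should simply be deleted. Your ``projective $\Rightarrow$ injective'' half, using Proposition \ref{th-1.001} twice with $R_R$ injective, is correct as it stands.

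Your (b)$\Rightarrow$(a) is where outside input is truly needed, and your first route is the right one: $R_R$ is free, hence projective, hence injective by (b), so $R$ is right self-injective, and the classical theorem that a right Artinian, right self-injective ring is quasi-Frobenius upgrades this to left self-injectivity, which is Frobenius in the sense this statement uses (right and left self-injective). However, your proposed alternative---``verify the socle identity $R/\texttt{J}(R)\cong \textit{Soc}(R)$ directly from right self-injectivity together with finiteness''---would fail: that implication is false, since there exist finite quasi-Frobenius rings whose socle is not isomorphic to $R/\texttt{J}(R)$ as a right module (indeed, the paper's own assertion that the two definitions are equivalent is inaccurate in general; they differ by the Nakayama permutation multiplicity condition). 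So keep the quasi-Frobenius citation, drop the alternative, and the proof is complete and considerably shorter.
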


A non-empty subset $C$ of $R^n$ is called a linear code of length $n$ over $R$ if it is a right $R$-submodule of $R^n$.
We define the inner product on $R^n,$ as follows $$[x,y]=\sum_{i=1}^nx_iy_i,$$ where $x=(x_1,x_2,\dots,x_n)$ and $y=(y_1,y_2,\dots,y_n)$ in $R^n.$ For a linear code $C$ (right $R$-submodule of $R^n$), the orthogonal set of $C$ is defined by
$$C^\perp=\{v\in R^n~|~[v,c]=0~\forall~c\in C\}.$$ Note that if $C$ be a linear code (right $R$-submodule of $R^n$) then $C^\perp$ is a left $R$-submodule of $R^n$. It is well known that $|C||C^\perp|=|R^n|$ (see \cite{Woo99}).\\
For a right $R$-submodule $C$ of $R^n,$ the left annihilator of $C$ is deﬁned by $$\texttt{Ann}_{l}(C)=\{x\in R^n~|~[x, c]=0~\forall~c\in C\}.$$ Note that $\texttt{Ann}_{l}(C)$ is a left $R$-submodule of $R^n.$ The following proposition, we found in \cite{Woo99,Lam19}.
\begin{prop}\label{p-2.1}
 Let $C$ be a right $R$-submodule of $R^n$. Assume the left annihilator of $C$ is  $\texttt{Ann}_{l}(C).$ Then $|C||\texttt{Ann}_{l}(C)|=|R^n|.$
\end{prop}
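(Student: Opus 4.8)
The plan is to deduce the counting identity from the character theory of finite Frobenius rings, following the circle of ideas in \cite{Woo99}. The decisive structural input is that, since $R$ is Frobenius, the isomorphism $R/\texttt{J}(R)\cong Soc(R)$ of right $R$-modules guarantees the existence of a \emph{generating character}: an additive character $\chi\colon (R,+)\to\mathbb{C}^{\times}$ whose kernel contains no nonzero right ideal of $R$. I would first establish (or cite) this existence, since it is precisely the feature that separates Frobenius rings from arbitrary finite rings and is the only place the hypothesis is genuinely used.

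With $\chi$ in hand, I would introduce the bi-additive pairing $\beta\colon R^n\times R^n\to\mathbb{C}^{\times}$, $\beta(x,y)=\chi([x,y])$, and consider the map $\Phi\colon R^n\to\widehat{R^n}$ sending $x$ to the character $\beta(x,-)$ of the additive group $(R^n,+)$. The key nondegeneracy claim is that $\Phi$ is injective: if $\beta(x,y)=1$ for all $y$, then testing against the vectors $y=(0,\dots,r,\dots,0)$ forces $\chi(x_ir)=1$ for every $r\in R$ and every index $i$, so each right ideal $x_iR$ lies in $\ker\chi$ and hence vanishes, giving $x=0$. Since $\widehat{R^n}$ and $R^n$ have the same finite order, $\Phi$ is an isomorphism of abelian groups. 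Standard Pontryagin duality for finite abelian groups then gives $|C^{\circ}|=|R^n|/|C|$ for the subgroup annihilator $C^{\circ}=\{\psi\in\widehat{R^n}:\psi(c)=1\ \forall c\in C\}$; pulling back through $\Phi$, the character annihilator $C^{\ast}=\{x\in R^n:\beta(x,c)=1\ \forall c\in C\}=\Phi^{-1}(C^{\circ})$ inherits the order $|C^{\ast}|=|R^n|/|C|$.

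It remains to identify $C^{\ast}$ with $\texttt{Ann}_{l}(C)$. The inclusion $\texttt{Ann}_{l}(C)\subseteq C^{\ast}$ is immediate from $\chi(0)=1$. For the reverse inclusion I would use that $C$ is a \emph{right} $R$-submodule: if $x\in C^{\ast}$, then for every $c\in C$ and $r\in R$ we have $cr\in C$ and $[x,cr]=\sum_i x_i(c_ir)=\bigl(\sum_i x_ic_i\bigr)r=[x,c]\,r$, so $\chi([x,c]\,r)=1$ for all $r$; thus the right ideal $[x,c]R$ lies in $\ker\chi$ and is therefore zero, whence $[x,c]=0$ and $x\in\texttt{Ann}_{l}(C)$. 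Combining, $|\texttt{Ann}_{l}(C)|=|C^{\ast}|=|R^n|/|C|$, which is the asserted identity. The main obstacle is the first step: producing the generating character and verifying that its kernel contains no nonzero right ideal from the Frobenius condition; once that is available, the remaining arguments are routine character counting together with a single use of the module structure to pass from $\chi([x,c])=1$ to $[x,c]=0$.
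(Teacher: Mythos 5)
Your proof is correct and coincides with the argument the paper itself relies on: the paper gives no proof of this proposition, citing Wood \cite{Woo99} (and Lam \cite{Lam19}) instead, and Wood's proof is precisely your generating-character/Pontryagin-duality argument, including the step where the right-module structure of $C$ is used to upgrade $\chi([x,c]r)=1$ for all $r$ to $[x,c]=0$, identifying the character-theoretic annihilator with $\texttt{Ann}_{l}(C)$. The one point worth making explicit is that the Frobenius hypothesis on $R$ is indispensable (the identity fails for general finite rings), and it enters exactly where you locate it: in the existence of a character of $(R,+)$ whose kernel contains no nonzero one-sided ideal.
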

Let $M$ be a right $R$-submodule of $R^n$ and $I$ be a $2$-sided ideal of $R.$ Recall that
$$MI=\{\sum_{\text{finite~sum}}mr~|~m\in M,~r\in I\}.$$
We give a proposition as follows.
\begin{prop}\label{p-5}\cite[Th. 13.11]{Is93}
 Let $M$ be a finitely generated right $R$-module. If $M\texttt{J}(R)=M,$ then $M=0.$
\end{prop}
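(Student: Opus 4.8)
The statement is the classical Nakayama lemma, so my plan is to adapt its standard proof to the right-module, non-commutative setting. The one fact I would isolate at the outset is that for every $r \in \texttt{J}(R)$ the element $1_R - r$ is a unit of $R$; this is a standard property of the Jacobson radical, and over a finite ring it is immediate, since a left-invertible element of a finite ring is automatically a two-sided unit. Everything else is bookkeeping with a minimal generating set.

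Concretely, I would argue by contradiction. Suppose $M \neq 0$ and choose a generating set $m_1, \ldots, m_k$ of $M$ of least possible cardinality $k \geq 1$, so that $M = m_1 R + \cdots + m_k R$. Using the hypothesis $M = M\texttt{J}(R)$, the generator $m_k$ can be written as a finite sum of elements $m r$ with $m \in M$ and $r \in \texttt{J}(R)$; expanding each such $m$ back in terms of $m_1, \ldots, m_k$ and collecting coefficients (which remain in $\texttt{J}(R)$ because it is a two-sided ideal) yields an expression $m_k = \sum_{i=1}^{k} m_i t_i$ with all $t_i \in \texttt{J}(R)$.

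Next I would isolate $m_k$: rewriting gives $m_k(1_R - t_k) = \sum_{i=1}^{k-1} m_i t_i$. Since $t_k \in \texttt{J}(R)$, the factor $1_R - t_k$ is a unit, and multiplying on the right by its inverse expresses $m_k$ as an $R$-combination of $m_1, \ldots, m_{k-1}$. If $k \geq 2$ this contradicts the minimality of $k$, since $M$ would then be generated by $k-1$ elements; if $k = 1$ the right-hand sum is empty, forcing $m_1 = 0$ and hence $M = 0$, against our assumption. Either way we reach a contradiction, so $M = 0$.

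The only genuinely substantive ingredient, and thus the main obstacle to keep honest, is the invertibility of $1_R - t_k$; non-commutativity matters only in that one must multiply by the inverse on the correct (right) side and verify that the collected coefficients $t_i$ stay inside the two-sided ideal $\texttt{J}(R)$. I note in passing that, because $R$ is finite and hence Artinian, $\texttt{J}(R)$ is nilpotent, say $\texttt{J}(R)^N = 0$; iterating $M = M\texttt{J}(R) = M\texttt{J}(R)^2 = \cdots = M\texttt{J}(R)^N = 0$ gives an even shorter argument that sidesteps the generating-set bookkeeping entirely, at the cost of invoking nilpotency of the radical.
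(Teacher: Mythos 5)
Your main argument is correct and is essentially the proof behind the paper's citation: the paper offers no proof of this proposition at all, referring instead to Isaacs [Th.~13.11], and that reference's proof is precisely the standard Nakayama argument you give (minimal generating set, coefficients collected back into the two-sided ideal $\texttt{J}(R)$, invertibility of $1_R-t_k$, right-sided bookkeeping throughout). Your handling of the two genuinely non-commutative points --- multiplying by $(1_R-t_k)^{-1}$ on the right, and checking that the collected coefficients stay in $\texttt{J}(R)$ because it is a left as well as right ideal --- is exactly what is needed.

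One caution about your closing remark: within this paper the ``shorter'' route via nilpotency of $\texttt{J}(R)$ would be circular, because the paper \emph{derives} nilpotency of the radical from this very proposition (Section 3: from $\texttt{J}(R)^{t+1}=\texttt{J}(R)^{t}$ and Proposition 2.5 applied to the finitely generated module $M=\texttt{J}(R)^{t}$, it concludes $\texttt{J}(R)^{t}=0$). Nilpotency of the radical of an Artinian ring can indeed be proved independently (by DCC together with the same unit property of $1_R-j$), but as the paper is organized, your first argument is the one that must stand on its own --- and it does.
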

We deduced the following proposition using Proposition \ref{p-5} and Theorem V.5 in \cite{McD74}.
\begin{prop}\label{p-6}
Let $R$ be a local$~^{*}$ ring. Let $t_1,t_2,\dots,t_k\in M,$ where $M$ is a right $R$-module.
\begin{itemize}
    \item[(a)] $t_1,t_2,\dots,t_k$ generates $M$ as a right $R$-module if and only if $\bar{t_1},\bar{t_2},\dots,\bar{t_k}$ generate $M/M\texttt{J}(R)$ as an $\mathbb{F}_q$-vector space.
    \item[(b)]  $t_1,t_2,\dots,t_k$ forms a minimal set of generators $M$ as a right $R$-module if and only if $\bar{t_1},\bar{t_2},\dots,\bar{t_k}$ generate $M/M\texttt{J}(R)$ as an $\mathbb{F}_q$-vector space.
\end{itemize}
\end{prop}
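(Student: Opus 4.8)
The plan is to treat this as a version of Nakayama's lemma for finite local rings, using Proposition \ref{p-5} as the engine and part (a) as the bridge to part (b). First I would record the standing facts. Since $R$ is a finite local ring, $R/\texttt{J}(R)$ is a finite division ring, hence a field $\mathbb{F}_q$ by Wedderburn's little theorem; and because the action of $\texttt{J}(R)$ becomes trivial on the quotient, $M/M\texttt{J}(R)$ carries a well-defined right $(R/\texttt{J}(R))$-module structure, so since $R/\texttt{J}(R)\cong\mathbb{F}_q$ it is an $\mathbb{F}_q$-vector space. I would also note that in the intended setting $M$ is a finite module (a submodule of $R^n$ with $R$ finite), so every submodule and quotient in sight is finitely generated and Proposition \ref{p-5} applies without reservation.

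For part (a), the forward implication is immediate: the canonical surjection $M\twoheadrightarrow M/M\texttt{J}(R)$ carries a generating set to a generating set. For the converse I would set $N=\sum_{i=1}^{k} t_iR$, the submodule generated by the $t_i$. The hypothesis that $\bar t_1,\dots,\bar t_k$ span $M/M\texttt{J}(R)$ is exactly the statement $M=N+M\texttt{J}(R)$. Passing to $M/N$ and computing, $(M/N)\texttt{J}(R)=(M\texttt{J}(R)+N)/N=M/N$, because $M\texttt{J}(R)+N=M$. As $M/N$ is finitely generated, Proposition \ref{p-5} forces $M/N=0$, i.e.\ $N=M$, so the $t_i$ generate $M$.

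For part (b), which I read as asserting that $t_1,\dots,t_k$ is a \emph{minimal} generating set if and only if $\bar t_1,\dots,\bar t_k$ is a \emph{basis} of $M/M\texttt{J}(R)$, I would deduce everything from part (a). If $t_1,\dots,t_k$ is a minimal generating set, then by (a) the images span $M/M\texttt{J}(R)$; were they linearly dependent, one of them, say $\bar t_k$, would lie in the span of the others, so $\bar t_1,\dots,\bar t_{k-1}$ would still span, whence by (a) $t_1,\dots,t_{k-1}$ would generate $M$, contradicting minimality. Thus the images are independent and form a basis. Conversely, if $\bar t_1,\dots,\bar t_k$ is a basis then by (a) the $t_i$ generate $M$; if some proper subset generated $M$, its images would span $M/M\texttt{J}(R)$ using fewer than $k=\dim_{\mathbb{F}_q} M/M\texttt{J}(R)$ vectors, which is impossible. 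Hence the generating set is minimal.

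The only genuine subtlety, and the step I would be most careful about, is the finite-generation hypothesis needed to invoke Proposition \ref{p-5}, since Nakayama can fail without it. In the coding-theoretic setting $M\subseteq R^n$ is finite, so this is automatic, but I would state it explicitly. The identification $R/\texttt{J}(R)\cong\mathbb{F}_q$ via Wedderburn is what lets part (b) be phrased in the clean language of dimension and bases, and is presumably where Theorem V.5 of \cite{McD74} enters.
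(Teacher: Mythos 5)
Your proof is correct and is precisely the deduction the paper intends: the paper gives no written proof at all, stating only that the proposition is ``deduced using Proposition \ref{p-5} and Theorem V.5 in \cite{McD74},'' and your Nakayama-style argument via Proposition \ref{p-5} (forward direction by surjectivity, converse by applying Proposition \ref{p-5} to $M/N$ where $N=\sum_i t_iR$, then part (b) by linear algebra over $\mathbb{F}_q$) supplies exactly that deduction. Your reading of part (b) --- replacing ``generate'' by ``form a basis of $M/M\texttt{J}(R)$'' --- is also the right repair of what is evidently a typo in the statement, since as literally written (b) would be identical in content to (a).
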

 $^{*}$ Reader can see Section $3$ about local rings.
\section{Characterization of LCP of codes over a finite non-commutative Frobenius local ring}

In this section, let $R$ be a finite non-commutative Frobenius local ring. According to Theorem $V.1$ in\cite{McD74}, $R$ is a local ring if and only if the non-units of $R$ form an additive Abelian group. We denote the Jacobson radical of $R$ is $\texttt{J}(R)$ consists of all non-unit elements of $R,$ as $R$ is an Artinian local ring. According to Wedderburn's little theorem, every skew field is commutative. Hence, we will denote $\mathbb{F}_q=R/\texttt{J}(R)$ as the residue field of $R.$ There is a natural surjective homomorphism from $R$ onto $\mathbb{F}_q,$ i.e., $\pi~:~R\rightarrow R/\texttt{J}(R),$ $r\mapsto r+\texttt{J}(R),$ for any $r$ in $R.$ This $\pi$ can be extended naturally to a homomorphism from $R^n$ to $\mathbb{F}_q^n.$ It is clear that this map is a surjective ring homomorphism. It is obvious that $\pi$ maps a linear code over $R$ to linear code over $\mathbb{F}_q.$ We deployed the preliminaries and existing results in Section $2.$ Let us note that linear code over $R$ is a right $R$-submodule of $R^n.$
On the other hand, since $R$ is a finite ring, then there exists $t\in \mathbb{N}$ such that $\texttt{J}(R)^{t+1}=\texttt{J}(R)^{t}.$ By the Proposition \ref{p-5}, we obtain that $\texttt{J}(R)^{t}=\{0\}.$ This implies that $\texttt{J}(R)$ is nilpotent with nilpotency index $t$ (i.e., $\texttt{J}(R)^{t}=\{0\}$ but $\texttt{J}(R)^{t-1}\neq\{0\}.$) Note that $\texttt{J}(R)$ is a both sided ideal of $R$ and satisfying the following condition
$$0=\texttt{J}(R)^{t}\subseteq \texttt{J}(R)^{t-1}\subseteq\dots\subseteq\texttt{J}(R)\subseteq R.$$
Using this relation, we make the following lemma.
\begin{lem}\label{lm-1}
Let $R$ be a finite non-commutative local ring with Jacobson radical $\texttt{J}(R).$ Then there exists $m\in \texttt{J}(R)$ such that $\alpha m=0,$ for all $\alpha\in \texttt{J}(R).$
\end{lem}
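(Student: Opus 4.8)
The plan is to exploit the nilpotency of the Jacobson radical that was established immediately before the statement. Recall that, since $R$ is finite, there is an integer $t$ with $\texttt{J}(R)^{t}=\{0\}$ but $\texttt{J}(R)^{t-1}\neq\{0\}$, together with the descending chain
$$0=\texttt{J}(R)^{t}\subseteq \texttt{J}(R)^{t-1}\subseteq\dots\subseteq\texttt{J}(R)\subseteq R.$$
The idea is to locate $m$ inside the last nonzero power $\texttt{J}(R)^{t-1}$, so that left multiplication by any element of $\texttt{J}(R)$ pushes it into $\texttt{J}(R)^{t}=\{0\}$.

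First I would dispose of the degenerate case. If $\texttt{J}(R)=\{0\}$, then taking $m=0$ makes the conclusion vacuously true, since the only $\alpha\in\texttt{J}(R)$ is $0$. Hence I may assume $\texttt{J}(R)\neq\{0\}$, which forces $t\geq 2$ and in particular $t-1\geq 1$.

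Next, since $\texttt{J}(R)^{t-1}\neq\{0\}$ by the choice of $t$ as the nilpotency index, I would pick any nonzero element $m\in\texttt{J}(R)^{t-1}$. Because $t-1\geq 1$, the chain above gives $\texttt{J}(R)^{t-1}\subseteq\texttt{J}(R)$, so $m\in\texttt{J}(R)$, as required. Finally, for an arbitrary $\alpha\in\texttt{J}(R)$ we have $\alpha m\in\texttt{J}(R)\cdot\texttt{J}(R)^{t-1}=\texttt{J}(R)^{t}=\{0\}$, whence $\alpha m=0$, which is exactly the desired property.

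I do not anticipate a serious obstacle: the whole argument rests on the nilpotency of $\texttt{J}(R)$, which is already in hand via Proposition \ref{p-5}. The only point deserving care is that $m$ should be taken nonzero (otherwise the claim is vacuous), and this is guaranteed precisely because $\texttt{J}(R)^{t-1}$ is the last nonzero term in the chain. Since $\texttt{J}(R)$ is a two-sided ideal, the product $\texttt{J}(R)\cdot\texttt{J}(R)^{t-1}$ is genuinely the power $\texttt{J}(R)^{t}$, so non-commutativity of $R$ introduces no difficulty.
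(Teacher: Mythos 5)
Your proof is correct and follows essentially the same route as the paper: pick a nonzero $m$ in the last nonzero power $\texttt{J}(R)^{t-1}$, so that left multiplication by any $\alpha\in\texttt{J}(R)$ lands in $\texttt{J}(R)^{t}=\{0\}$. Your added care about the degenerate case $\texttt{J}(R)=\{0\}$ and the explicit containment $\texttt{J}(R)^{t-1}\subseteq\texttt{J}(R)$ only makes the argument tidier than the paper's version.
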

\begin{proof}
By the previous discussion, we obtain that $\texttt{J}(R)^t=0$ but $\texttt{J}(R)^{t-1}\neq 0$ and $\texttt{J}(R)^{t-1}\subseteq \texttt{J}(R).$ Thus there exists $m(\neq 0)\in \texttt{J}(R)^{t-1}$ such that $\alpha m=0,$ for all $\alpha\in \texttt{J}(R).$
\end{proof}
\begin{defi}\label{def-3.1}
 For any positive integer $n,$ an $n\times n$ matrix $\mathrm{A}$ is called invertible over $R$ if $\pi(\mathrm{A})$ is invertible over $\mathbb{F}_q,$ where $\pi(\mathrm{A})=(\pi(a_{ij}))$ for all $1\leq i,j \leq n.$
\end{defi}

\begin{lem}\label{lm-3.2}
 Let $C$ and $D$ be two linear codes over $R.$ Then $C\cap D=\{0\}$ if and only if $\pi(C)\cap \pi(D)=\{0\}.$
\end{lem}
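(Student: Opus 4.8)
The plan is to prove the two implications separately, since they are of very different character. The forward implication, $C\cap D=\{0\}\Rightarrow\pi(C)\cap\pi(D)=\{0\}$, holds for arbitrary submodules and is exactly where Lemma \ref{lm-1} does its work; the reverse implication is the delicate one, and it is where I expect the real difficulty to lie.

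For the forward direction I would argue by contraposition. Suppose $\pi(C)\cap\pi(D)\neq\{0\}$ and choose $\bar z\neq 0$ in the intersection, say $\bar z=\pi(c)=\pi(d)$ with $c\in C$ and $d\in D$. Since $\bar z\neq 0$, some coordinate index $i$ satisfies $\bar z_i\neq 0$, and because $R$ is local this forces both $c_i$ and $d_i$ to be units of $R$ (a coordinate lies in $\texttt{J}(R)$ exactly when its image under $\pi$ vanishes). Now invoke Lemma \ref{lm-1} to produce the nonzero $m\in\texttt{J}(R)$ with $\alpha m=0$ for every $\alpha\in\texttt{J}(R)$, and multiply on the right: $cm\in C$ and $dm\in D$. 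The $i$-th coordinate of $cm$ is $c_i m\neq 0$ (a unit times the nonzero $m$), so $cm\neq 0$; on the other hand $c-d\in\texttt{J}(R)^n$ because $\pi(c-d)=0$, so every coordinate of $c-d$ annihilates $m$ and hence $(c-d)m=0$, i.e. $cm=dm$. Thus $cm=dm$ is a nonzero element of $C\cap D$, contradicting $C\cap D=\{0\}$. This is the step that explains why Lemma \ref{lm-1} is recorded immediately beforehand.

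For the reverse direction, $\pi(C)\cap\pi(D)=\{0\}\Rightarrow C\cap D=\{0\}$, the natural start is to take $x\in C\cap D$; then $\pi(x)\in\pi(C)\cap\pi(D)=\{0\}$, so $x\in\texttt{J}(R)^n$. The main obstacle is that this containment alone does not force $x=0$, and the gap cannot be closed by the radical filtration unaided. The clean way I would try to finish is to identify $\pi(C)$ with $C/C\texttt{J}(R)$ and $\pi(D)$ with $D/D\texttt{J}(R)$, lift $\mathbb{F}_q$-bases to minimal generating sets via Proposition \ref{p-6}, and show that the hypothesis $\pi(C)\cap\pi(D)=\{0\}$ makes the combined generators of $C$ and $D$ a minimal generating set of $C+D$; a count then gives $|C+D|=|C|\,|D|$ and hence $C\cap D=\{0\}$. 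The crux is precisely the identification $\pi(C)\cong C/C\texttt{J}(R)$: it is valid when $C$ and $D$ are free, i.e. direct summands of $R^n$, but can fail for a general submodule, where $\dim_{\mathbb{F}_q}\pi(C)$ may be strictly smaller than $\dim_{\mathbb{F}_q} C/C\texttt{J}(R)$. I therefore expect the reverse implication to rely on the codes being free (as they are in the complementary-pair setting of interest), and isolating this hypothesis is the heart of the argument.
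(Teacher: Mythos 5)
Your proof of the forward implication ($C\cap D=\{0\}\Rightarrow\pi(C)\cap\pi(D)=\{0\}$) is correct and is essentially the paper's own argument: the paper likewise takes $x=\pi(c)=\pi(d)$, invokes Lemma \ref{lm-1} to get $cm=dm\in C\cap D$, and derives the contradiction from a unit coordinate of $c$.

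On the reverse implication your suspicion is not merely reasonable --- it pinpoints a genuine error in the paper's own proof. The paper argues: $x\in C\cap D$ and $\pi(x)=0$, ``which means $x\in(C\cap D)\texttt{J}(R)$,'' and then applies Nakayama (Proposition \ref{p-5}). But $\pi(x)=0$ only gives $x\in\texttt{J}(R)^n=R^n\texttt{J}(R)$, and the identification of $(C\cap D)\cap\texttt{J}(R)^n$ with $(C\cap D)\texttt{J}(R)$ is exactly the point you warn about: it holds for direct summands, not for arbitrary submodules. Indeed the lemma as stated is false. Take $n=1$ and $C=D=\texttt{J}(R)$, which is a nonzero right $R$-submodule of $R$ (a finite non-commutative local ring cannot have $\texttt{J}(R)=0$, since it would then be a finite division ring, hence commutative by Wedderburn); then $\pi(C)\cap\pi(D)=\{0\}$ while $C\cap D=\texttt{J}(R)\neq\{0\}$. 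So the freeness hypothesis you isolate is genuinely necessary, and it is precisely the hypothesis in force in the only place the paper invokes this direction, namely Theorem \ref{th-2}.

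That said, your sketched completion for the free case has a gap of its own at the counting step: knowing that the combined lifts form a minimal generating set of $C+D$ does not give $|C+D|=|C|\,|D|$, since a module minimally generated by $k+l$ elements can have order far below $|R|^{k+l}$ (e.g.\ $\texttt{J}(R)$ itself). Here is how to close it. First, a basis $x_1,\dots,x_k$ of a free code reduces to an $\mathbb{F}_q$-independent family: if $u=\sum_i x_ir_i$ had all coordinates in $\texttt{J}(R)$ with some $r_i$ a unit, then $um=0$ for the element $m$ of Lemma \ref{lm-1}, so freeness forces $r_im=0$ for all $i$ and hence $m=0$, a contradiction; in particular $\pi(C)\cong C/C\texttt{J}(R)$ does hold for free codes, confirming your remark. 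By the hypothesis $\pi(C)\cap\pi(D)=\{0\}$, the reductions $\bar x_1,\dots,\bar x_k,\bar y_1,\dots,\bar y_l$ are then jointly independent; extend them to a basis of $\mathbb{F}_q^n$ and lift the added vectors arbitrarily to $R^n$. The resulting $n\times n$ matrix over $R$ has invertible reduction, hence is itself invertible: its rows generate $R^n$ by Nakayama (Proposition \ref{p-5}), and a surjective endomorphism of the finite module $R^n$ is bijective. Thus $x_1,\dots,x_k,y_1,\dots,y_l$ are part of a free basis of $R^n$, and $C\cap D=\{0\}$ follows immediately from uniqueness of coordinates; your count also drops out, since $C+D$ is then visibly free of rank $k+l$.
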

\begin{proof}
Let us suppose that $\pi(C)\cap \pi(D)=\{0\}.$  We shall prove that $C\cap D=\{0\}.$  Let $x\in C\cap D,$ which implies $x\in C$ and $x\in D,$ this force that $\pi(x)\in \pi(C)\cap \pi(D).$ By the hypothesis, $\pi(x)=0,$ which means $x\in (C\cap D)\texttt{J}(R).$ Hence, $C\cap D =(C\cap D)\texttt{J}(R).$ Since $C\cap D$ is finitely generated right $R$-module, then by Proposition \ref{p-5}, we obtain that $C\cap D=\{0\}.$\\
 For the reverse part, let us assume that $C\cap D=\{0\}.$ Let $x\in \pi(C)\cap \pi(D),$ which means $x\in \pi (C)$ and $x\in\pi(D).$ then there exists $c\in C$ and $d\in D$ such that $x=\pi(c)=\pi(d).$ This follows that $c-d\in R^n\texttt{J}(R).$ Now by the Lemma \ref{lm-1}, there exist non-zero $m\in \texttt{J}(R)$ such that $(c-d)m=0,$ which implies $cm=dm\in C\cap D.$ By hypothesis $cm=dm=0.$ Thus, $c\in R^n\texttt{J}(R).$ Otherwise, $c\in R^n\setminus R^n\texttt{J}(R).$ This means that $cm (\neq 0) \in C\cap D,$ which contradicts the fact that $C\cap D=\{0\}.$ Therefore, $\pi(C)\cap \pi(D)=\{0\}.$
\end{proof}
\begin{thm}\label{th-1}
Let $C$ and $D$ be two linear codes over $R.$ If the pair $(C,D)$ is LCP. Then the pair $(\pi(C), \pi(D))$ is LCP.
\end{thm}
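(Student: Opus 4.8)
The plan is to unwind the definition of an LCP into its two constituent conditions and verify each separately after pushing everything through $\pi$. Recall that the pair $(C,D)$ being LCP over $R$ means precisely that $C\oplus D=R^n$ as right $R$-modules, equivalently that $C\cap D=\{0\}$ and $C+D=R^n$. Correspondingly, to prove that $(\pi(C),\pi(D))$ is LCP over $\mathbb{F}_q$ I must establish the two $\mathbb{F}_q$-analogues: $\pi(C)\cap\pi(D)=\{0\}$ and $\pi(C)+\pi(D)=\mathbb{F}_q^n$.

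For the intersection condition, there is essentially nothing new to do: the hypothesis $C\cap D=\{0\}$ feeds directly into the forward implication of Lemma \ref{lm-3.2}, which immediately yields $\pi(C)\cap\pi(D)=\{0\}$. All of the genuine work here has already been absorbed into that lemma, whose proof rested on the existence of the special annihilating element $m$ supplied by Lemma \ref{lm-1} together with the Nakayama-type vanishing of Proposition \ref{p-5}.

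For the sum condition, I would exploit that $\pi:R^n\to\mathbb{F}_q^n$ is a surjective additive ($R$-module) homomorphism. Since $\pi$ is additive, it carries a sum of submodules to the sum of their images, so $\pi(C+D)=\pi(C)+\pi(D)$; and since $(C,D)$ is LCP we have $C+D=R^n$, whence $\pi(C)+\pi(D)=\pi(R^n)=\mathbb{F}_q^n$ by surjectivity of $\pi$. Combining the two verified conditions gives $\pi(C)\oplus\pi(D)=\mathbb{F}_q^n$, that is, $(\pi(C),\pi(D))$ is LCP.

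I do not anticipate a serious obstacle in this argument, since the delicate lifting phenomenon---that a trivial intersection downstairs is equivalent to a trivial intersection upstairs---was already isolated and settled in Lemma \ref{lm-3.2}. The only point meriting a moment's care is the identity $\pi(C+D)=\pi(C)+\pi(D)$, which uses nothing more than additivity of $\pi$, together with the observation that surjectivity of the coordinatewise projection guarantees $\pi(R^n)=\mathbb{F}_q^n$. Thus the theorem follows by assembling Lemma \ref{lm-3.2} with this elementary image-of-a-sum computation.
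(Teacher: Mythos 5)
Your proposal is correct and follows essentially the same route as the paper: the intersection condition is delegated to Lemma \ref{lm-3.2}, and the sum condition follows from surjectivity of $\pi$ (your identity $\pi(C+D)=\pi(C)+\pi(D)$ is just the paper's element-by-element argument stated structurally). No gaps.
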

\begin{proof}
  Since the pair $(C,D)$ is an LCP code. i.e., $C+D=R^n$ and $C\cap D=\{0\}.$ By the Lemma \ref{lm-3.2}, we obtain that $\pi(C)\cap\pi(D)=\{0\}.$ Since, $\pi$ is surjective, for each $x\in \mathbb{F}_q^n,$ there exists $a\in R^n$ such that $x=\pi(a).$ Also, $a=c+d,$ for some $c\in C$ and $d\in D.$ Thus $x=\pi(c)+\pi(d)\in \pi(C)+\pi(D).$ Since $\pi(C)+\pi(D)$ is a subspace of $\mathbb{F}_q^n,$ then, $\pi(C)+\pi(D)=\mathbb{F}_q^n.$ Thus, the pair $(\pi(C), \pi(D))$ is an LCP.
\end{proof}
 Naturally one question arrive our mind for what condition LCP of codes
exist. Now, we will find answer of our question as follows. Before, we will state a lemma.
\begin{lem}\label{kap}\cite[Theorem 2]{Kap58}
Any projective module over a local ring (not necessarily commutative) is free right $R$-module.
\end{lem}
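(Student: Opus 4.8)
The plan is to reconstruct Kaplansky's two-stage argument, which first reduces an arbitrary (possibly infinitely generated) projective module to the countably generated case and then settles that case directly. Since $R$ is local, $\texttt{J}(R)$ is exactly the set of non-units and is the unique maximal right (and left) ideal, so $R/\texttt{J}(R)$ is a division ring; note that finiteness of $R$ is not used anywhere, so the argument applies to the general statement as worded. The single ring-theoretic fact driving the whole proof is the dichotomy that every element of $R$ is either a unit or lies in $\texttt{J}(R)$, together with its immediate consequence that if $c+c'=1$ then at least one of $c,c'$ is a unit (otherwise $1=c+c'\in \texttt{J}(R)$, a contradiction).

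\emph{First stage (reduction to countable generation).} By definition a projective module $P$ is a direct summand of a free module $F=\bigoplus_{i\in I}Re_i$. The engine here is Kaplansky's decomposition theorem: any direct summand of a direct sum of countably generated modules is again a direct sum of countably generated modules. Applying it to $F$, which is a direct sum of the singly generated modules $Re_i$, yields $P=\bigoplus_{\lambda}P_\lambda$ with every $P_\lambda$ countably generated; each $P_\lambda$ is a direct summand of $P$, hence of $F$, hence projective. Granting that every countably generated projective is free, i.e.\ isomorphic to a direct sum of copies of $R$, the module $P$ is itself a direct sum of copies of $R$ and therefore free, which completes the reduction.

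\emph{Second stage (a countably generated projective $P$ is free).} Choosing generators $p_1,p_2,\dots$ of $P$ and splitting the surjection $\bigoplus_{n}Re_n\to P$, $e_n\mapsto p_n$, realizes $P$ as a summand of a countably generated free module, so we may assume $F=\bigoplus_{n\ge 1}Re_n=P\oplus Q$ with projection $\pi\colon F\to P$. I would extract free rank-one summands one basis vector at a time. Writing $e_1=\pi(e_1)+(1-\pi)(e_1)$, compare the $e_1$-coefficients $c$ of $\pi(e_1)$ and $c'$ of $(1-\pi)(e_1)$; since $c+c'=1$, one of them is a unit. If $c$ is a unit, then $\pi(e_1)\in P$ has a unit $e_1$-coefficient, so replacing $e_1$ by $x_1:=\pi(e_1)$ yields a new basis of $F$; thus $Rx_1$ is a free rank-one summand of $F$ contained in $P$, and by the modular law $P=Rx_1\oplus(P\cap F'')$ where $F''=\bigoplus_{i\ge 2}Re_i$. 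Projecting along $Rx_1$ onto $F''$ transports $F=P\oplus Q$ to an internal decomposition $F''=(P\cap F'')\oplus Q'$ with $Q'\cong Q$, so the construction recurses on the free module $F''$; if instead $c'$ is the unit, the symmetric move splits a rank-one free summand out of $Q$. Iterating over $e_1,e_2,\dots$ produces free rank-one summands $L_1,L_2,\dots$, each contained in $P$ or in $Q$, and arranging the bookkeeping so that $e_1,\dots,e_n$ are absorbed by finitely many of the $L_k$ gives $F=\bigoplus_k L_k$; matching this against $F=P\oplus Q$ exhibits $P$ as the direct sum of those $L_k$ lying in it, so $P$ is free.

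\emph{Main obstacle.} The delicate part is not the local-ring linear algebra of the second stage, whose only nontrivial input is the unit-or-radical dichotomy, but the first-stage decomposition theorem, which is set-theoretic rather than computational. Its proof well-orders $I$ and builds, by transfinite recursion, a continuous ascending chain of direct summands of $F$ whose successive quotients are countably generated summands of $P$; the crux is verifying at limit ordinals that the union of the chain is again a direct summand, which forces the construction to be carried out ``up to a countable correction'' at each step. Once that structural result is secured, the local hypothesis enters only through the elementary change-of-basis moves above, and the conclusion follows.
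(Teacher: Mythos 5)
The paper does not actually prove this lemma --- it is quoted verbatim as \cite[Theorem 2]{Kap58} --- so the right benchmark is Kaplansky's original argument, which is indeed the two-stage proof you outline. Your first stage is faithful to the source: Kaplansky's Theorem 1 (a direct summand of a direct sum of countably generated modules is again such a direct sum) correctly reduces everything to the countably generated case, and you rightly identify the transfinite chain at limit ordinals as the delicate point there. The genuine gap is in your second stage, at the final step ``iterating \ldots gives $F=\bigoplus_k L_k$.'' Your construction only yields, for each finite $m$, a decomposition $F=L_1\oplus\cdots\oplus L_m\oplus F^{(m+1)}$ with $F^{(m+1)}$ free on the unused basis vectors; this does \emph{not} imply that the $L_k$ exhaust $F$ in the limit. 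Concretely, over a field take $v_i=e_i-e_{i+1}$: for every $m$ the family $\{v_1,\dots,v_m,e_{m+1},e_{m+2},\dots\}$ is a basis of $F$ (a perfectly legitimate chain of single-vector exchanges), yet $\{v_1,v_2,\dots\}$ spans only the sum-zero vectors and $e_1\notin\bigoplus_k kv_i$. So a sequence of valid finite-stage basis exchanges need not converge to a basis, and the ``bookkeeping'' you wave at is not a routine rearrangement --- it is the same flavor of limit problem you flagged in stage one, and as written nothing in your argument rules out the tail of $e_1$ surviving forever.

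Kaplansky's actual proof restructures stage two precisely to avoid this: he first proves that every element of a (countably generated) projective module over a local ring lies in a \emph{free direct summand} of it --- using a basis chosen so that $x=u_1a_1+\cdots+u_na_n$ with $n$ minimal, from which locality forces the relevant coefficient matrix to be invertible --- and then absorbs a fixed countable generating set $q_1,q_2,\dots$ of $P$ into an increasing chain $F_1\subseteq F_1\oplus F_2\subseteq\cdots$ of free direct summands of $P$ itself, with $q_i$ inside the $i$-th stage \emph{by construction}. Exhaustion of the union is then automatic because the generators are absorbed, which is exactly the step your version lacks. Two smaller remarks: the paper works with right $R$-modules, so your $Re_i$'s should be $e_iR$'s throughout; and for the use this paper makes of the lemma (codes are submodules of $R^n$ over a \emph{finite} local ring, hence finitely generated), the countably generated machinery is unnecessary --- your exchange argument, run for $n$ steps, terminates and is complete in the finitely generated case, and alternatively one can lift a basis of $P/P\texttt{J}(R)$ via Proposition \ref{p-6} and invert the resulting matrix modulo $\texttt{J}(R)$.
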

\begin{lem} Let $C$ and $D$ be two linear codes over $R.$ If the pair $(C,D)$ is an LCP of codes in $R^n,$ then $C$ and $D$ both are free right $R$-module.
\end{lem}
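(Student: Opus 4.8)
The plan is to read the LCP hypothesis as an internal direct sum decomposition of the free module $R^n$, and then to chain together the direct-summand characterization of projectivity with Kaplansky's theorem. First I would unwind the definition: saying that $(C,D)$ is an LCP of codes in $R^n$ means exactly that $C+D=R^n$ and $C\cap D=\{0\}$, so that $R^n = C\oplus D$ as right $R$-modules. In particular each of $C$ and $D$ is a direct summand of $R^n$.

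Next I would exploit the fact that $R^n$ is, by definition, a free right $R$-module, and hence projective. Applying Proposition \ref{th-1.001}(a) to the decomposition $R^n = C\oplus D$, the projectivity of the whole module forces projectivity of each summand; that is, both $C$ and $D$ are projective right $R$-modules. This step is purely formal and carries no difficulty, since it is just the ``direct summand of a projective is projective'' direction of the cited proposition.

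Finally I would upgrade projectivity to freeness. Here I invoke Lemma \ref{kap} (Kaplansky's theorem): over the local ring $R$ — which need not be commutative — every projective right $R$-module is free. Combining this with the previous step yields that $C$ and $D$ are both free right $R$-modules, which is the assertion.

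The only genuinely substantive ingredient is Lemma \ref{kap}, as it is precisely what promotes ``projective'' to ``free'' in the non-commutative local setting, where projectivity alone would not suffice over a general ring. Everything else is a routine unwinding of definitions: identifying the LCP condition with the internal direct sum $R^n=C\oplus D$, and passing to the summands through Proposition \ref{th-1.001}(a). Thus I anticipate no real obstacle; the proof is short and structural once the correct tools are assembled.
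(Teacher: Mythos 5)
Your proof is correct and follows essentially the same route as the paper's: identify the LCP condition with the decomposition $C\oplus D=R^n$, deduce that $C$ and $D$ are projective as direct summands of a free module, and then apply Kaplansky's theorem (Lemma \ref{kap}) over the local ring to upgrade projectivity to freeness. The only cosmetic difference is that you justify the projectivity step via Proposition \ref{th-1.001}(a), whereas the paper invokes the definition of projectivity directly; the substance is identical.
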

\begin{proof}
 Since the pair $(C,D)$ forms an LCP of codes, that means $C\oplus D=R^n$ it follows that $C\oplus D$ is free right $R$-module. This implies that $C$ and $D$ both are projective right $R$-module. As $R$ is local, then by Lemma \ref{kap}, $C$ and $D$ both are free right $R$-module.
\end{proof}
If we consider $C$ and $D$ to be both free right $R$-module, the pair $(C,D)$ may or may not be LCP of codes in general. For this, we illustrate an example as follows
\begin{exam}\label{ex-3.1}
Let $R$ be the collection of $2\times 2$ matrices, and the element is of the form as follows.

$$\left(
{\begin{array}{cccc}
   a & x \\
   0 & a \\
   \end{array} } \right),$$ where $a,x\in \mathbb{F}_q.$ The Jacobson radical of $R$ consists of matrices of the form $$\left(
{\begin{array}{cccc}
   0 & x \\
   0 & 0 \\
   \end{array} } \right),$$ where $x\in \mathbb{F}_q.$ One can check that $R/\texttt{J}(R)\simeq \mathbb{F}_q$ under the mapping $\left(
{\begin{array}{cccc}
   a & x \\
   0 & a \\
   \end{array} } \right)\mapsto a.$ Hence, $R$ is a non-commutative local ring. Let $C$ and $D$ be two linear codes over $R$ of length $4.$ The generator matrices of $C$ and $D$ are
   $G_1=\left(
{\begin{array}{cccc}
   \textbf{1} & \textbf{0} & \textbf{1} & \textbf{0} \\
   \textbf{0} & \textbf{1} & \textbf{0} & \textbf{1} \\
   \end{array} } \right)$ and $G_1=\left(
{\begin{array}{cccc}
   \textbf{1} & \textbf{1} & \textbf{1} & \textbf{1} \\
   \textbf{0} & \textbf{1} & \textbf{1} & \textbf{1} \\
   \end{array} } \right),$ where $\textbf{1}=\left(
{\begin{array}{cccc}
   1 & 0 \\
   0 & 1 \\
   \end{array} } \right)$ is unit in $R$. It is easy to see that $C+D=R^n$ but $C\cap D\neq \{0\}$ as $(\textbf{1},\textbf{1},\textbf{1},\textbf{1})\in C\cap D$. However, $C$ and $D$ are free.
   \end{exam}
\begin{thm}\label{th-2}
  Let $C$ and $D$ be two free linear codes over $R.$ Then the pair $(C,D)$ is LCP if and only if the pair $(\pi(C), \pi(D))$ is LCP.
\end{thm}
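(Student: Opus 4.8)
The plan is to dispatch the two implications separately. The forward implication is essentially free of charge: the assumption that $C$ and $D$ be free right $R$-modules plays no role in it, and the statement ``$(C,D)$ LCP $\Rightarrow (\pi(C),\pi(D))$ LCP'' is exactly the content of Theorem \ref{th-1}. Hence all the work sits in the converse, where I assume $\pi(C)\oplus\pi(D)=\mathbb{F}_q^n$ and must deduce $C\oplus D=R^n$.

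For the converse I would verify the two defining conditions of an LCP in turn. The intersection condition costs nothing: from $\pi(C)\cap\pi(D)=\{0\}$, Lemma \ref{lm-3.2} delivers $C\cap D=\{0\}$ directly. The substantive step is the sum condition $C+D=R^n$, and this is where I expect the only real difficulty, namely lifting the spanning property from the residue field $\mathbb{F}_q$ up to $R$. I would handle it by a Nakayama argument: applying $\pi$ gives $\pi(C+D)=\pi(C)+\pi(D)=\mathbb{F}_q^n$, which unwinds to $(C+D)+R^n\texttt{J}(R)=R^n$. Putting $M=R^n/(C+D)$, a finitely generated right $R$-module, this identity says precisely that $M=M\texttt{J}(R)$, so Proposition \ref{p-5} forces $M=0$, i.e. $C+D=R^n$. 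Together with $C\cap D=\{0\}$ this makes $(C,D)$ an LCP.

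It is worth recording where the freeness hypothesis would enter, since the Nakayama route above does not in fact use it. A more quantitative proof of the sum condition runs through cardinalities: each of $\pi(C),\pi(D)$ is spanned by the $\pi$-images of a basis of the corresponding free module, so $\dim_{\mathbb{F}_q}\pi(C)\le\operatorname{rank}C$ and similarly for $D$; writing $|R|=q^{\nu}$ (a power of $q$, since each layer $\texttt{J}(R)^i/\texttt{J}(R)^{i+1}$ is a vector space over $\mathbb{F}_q=R/\texttt{J}(R)$), the relations $\dim\pi(C)+\dim\pi(D)=n$ and $|C\oplus D|=|C||D|\le|R^n|$ pinch to give $\operatorname{rank}C+\operatorname{rank}D=n$, whence $|C\oplus D|=|R^n|$ and therefore $C\oplus D=R^n$. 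Either way the crux is the same: converting a basis of $\mathbb{F}_q^n$ assembled from $\pi(C)$ and $\pi(D)$ back into a decomposition of $R^n$, with Proposition \ref{p-5} (or, in the counting variant, the freeness-enabled cardinality bookkeeping) supplying the lift.
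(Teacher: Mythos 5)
Your forward implication and your treatment of the intersection coincide with the paper's proof: both quote Theorem \ref{th-1} for the forward direction and Lemma \ref{lm-3.2} for $C\cap D=\{0\}$. The genuine difference is in the sum condition. The paper lifts bases of $\pi(C)$ and $\pi(D)$ to minimal generating sets of $C$ and $D$ via Proposition \ref{p-6} and then invokes freeness to obtain the count $|C||D|=|R^n|$; you instead run Nakayama: $\pi(C)+\pi(D)=\mathbb{F}_q^n$ unwinds to $(C+D)+R^n\texttt{J}(R)=R^n$, so $M=R^n/(C+D)$ satisfies $M\texttt{J}(R)=M$, and Proposition \ref{p-5} gives $C+D=R^n$. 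That step is correct, and it is arguably cleaner than the paper's counting; your ``quantitative'' second variant is, in substance, the paper's own argument written out in more detail.

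But your closing remark --- that the Nakayama route uses no freeness --- is a red flag you should have chased, not a bonus. As written, your converse never uses freeness at all, so it would establish the statement for arbitrary linear codes, and that statement is false. Take $n=1$, $C=\texttt{J}(R)$, $D=R$, with $\texttt{J}(R)\neq\{0\}$: then $\pi(C)=\{0\}$ and $\pi(D)=\mathbb{F}_q$, so $(\pi(C),\pi(D))$ is an LCP in $\mathbb{F}_q^{1}$; consistently with your Nakayama step, $C+D=R$; but $C\cap D=\texttt{J}(R)\neq\{0\}$, so $(C,D)$ is not an LCP. The culprit is Lemma \ref{lm-3.2}, which both you and the paper cite for the intersection: stated for arbitrary linear codes it is false (the same pair refutes it), its proof breaking exactly where ``$\pi(x)=0$'' (which yields only $x\in R^n\texttt{J}(R)$) is read as ``$x\in(C\cap D)\texttt{J}(R)$''. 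Freeness is precisely what rescues the intersection step: if $c_1,\dots,c_k$ is a basis of the free module $C$ and $\sum c_ia_i\in R^n\texttt{J}(R)$, then right-multiplying by the nonzero $m$ of Lemma \ref{lm-1} gives $\sum c_i(a_im)=0$, hence $a_im=0$ for every $i$, hence no $a_i$ is a unit; so bases of $C$ and of $D$ have $\mathbb{F}_q$-independent images, and when $\pi(C)\cap\pi(D)=\{0\}$ their union has independent images, lifts to part of a basis of $R^n$ (a square matrix whose reduction modulo $\texttt{J}(R)$ is invertible is invertible over $R$), and this forces $C\cap D=\{0\}$. In short: your proof stands only if Lemma \ref{lm-3.2} is restricted to free codes and re-proved along these lines --- a caveat that applies equally to the paper's own proof --- and the suggestion that the freeness hypothesis is dispensable must be withdrawn.
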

\begin{proof}
  From Theorem \ref{th-1}, follows that if the pair $(C,D)$ is LCP, then the pair $(\pi(C), \pi(D))$ is LCP.\\
  For the converse part, let us suppose that the pair $(\pi(C), \pi(D))$ is LCP, that means $\pi(C)\oplus\pi(D)=\mathbb{F}_q^n,$ i.e., $\pi(C)+\pi(D)=\mathbb{F}_q^n$ and $\pi(C)\cap \pi(D)=\{0\}.$ Now by the Lemma \ref{lm-3.2}, we obtain that $C\cap D=\{0\}.$ Let $\{\pi(x_1),\pi(x_2),\dots,\pi(x_k)\}$ be a basis of $\pi(C)$ and $\{\pi(x_{k+1}),\pi(x_{k+2}),\dots,\pi(x_n)\}$ be a basis of $\pi(D).$ By Proposition \ref{p-6}, we have $\{x_1,x_2,\dots,x_k\}$ is a minimal generating set of $C$ and $\{x_{k+1},x_{k+2},\dots,x_{n}\}$ is a minimal generating set of $D$. By hypothesis, $C$ and $D$ both are free, so we get that $|C||D|=|R^n|.$ Thus, the pair $(C,D)$ is LCP.
\end{proof}
Note that if $C$ is a linear code over $R,$ with generator matrix $\mathrm{G}$ and parity check matrix $\mathrm{H}$ then by Proposition \ref{p-6}, we have $\pi(\mathrm{G})$ and $\pi(\mathrm{H})$ is a generator and parity check matrix of $\pi(C),$ respectively. Now, we make a proposition, which is found in \cite{Bhowmick23}.
\begin{prop}\label{p-3.1}\cite[Corollary 4]{Bhowmick23}
 Let $C$ and $D$ are free codes over finite commutative Frobenius ring $R$ with generator matrices $\mathrm{G}_1,$ $\mathrm{G}_2$ and parity check matrices $\mathrm{H}_1,$ $\mathrm{H}_2,$ respectively, with the condition $|C||D|=|R^n|.$ Then the following statements are equivalent
 \begin{itemize}
     \item[(1)] the pair $(\pi(C),\pi(D))$ is an LCP codes in $\mathbb{F}_q^n$;
     \item[(2)] $\pi(\mathrm{H}_2)\pi(\mathrm{G}_1)^\top$ is invertible or
$\pi(\mathrm{H}_1)\pi(\mathrm{G}_2)^\top$ is invertible;
     \item[(3)] $\left(
{\begin{array}{ccc}
   \pi(\mathrm{G}_1) \\
   \pi(\mathrm{G}_2) \\
   \end{array} } \right)$ is invertible over $\mathbb{F}_q$;
   \item[(4)]  $\left(
{\begin{array}{ccc}
   \pi(\mathrm{H}_1) \\
   \pi(\mathrm{H}_2) \\
   \end{array} } \right)$ is invertible over $\mathbb{F}_q.$
 \end{itemize}

\end{prop}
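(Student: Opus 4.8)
The plan is to observe that after applying $\pi$ the entire statement lives over the field $\mathbb{F}_q$, so everything reduces to elementary linear algebra over $\mathbb{F}_q$. First I would pin down the dimensions. Since $C$ and $D$ are free with $|C||D|=|R^n|$, Proposition \ref{p-6} gives $\dim_{\mathbb{F}_q}\pi(C)=k$ and $\dim_{\mathbb{F}_q}\pi(D)=n-k$ for some $k$, so that $\dim\pi(C)+\dim\pi(D)=n$. Consequently $\pi(G_1)$ is a full-row-rank $k\times n$ matrix, $\pi(G_2)$ is $(n-k)\times n$, while (by the note preceding the proposition) the parity-check matrices $\pi(H_1)$ of $\pi(C)$ and $\pi(H_2)$ of $\pi(D)$ have sizes $(n-k)\times n$ and $k\times n$, with their rows generating $\pi(C)^\perp$ and $\pi(D)^\perp$ respectively.

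For $(1)\Leftrightarrow(3)$ I would note that the rows of the stacked matrix in (3) span $\pi(C)+\pi(D)$; being an $n\times n$ matrix, it is invertible exactly when these rows are independent, i.e. when $\dim(\pi(C)+\pi(D))=n$. Since $\dim\pi(C)+\dim\pi(D)=n$, this happens iff $\pi(C)\cap\pi(D)=\{0\}$, which by the same dimension count is equivalent to $\pi(C)\oplus\pi(D)=\mathbb{F}_q^n$, i.e. to (1).

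For $(1)\Leftrightarrow(2)$ the key step is a kernel computation. Writing a generic codeword of $\pi(C)$ as $x\,\pi(G_1)$ with $x\in\mathbb{F}_q^{k}$, membership in $\pi(D)=(\pi(D)^\perp)^\perp$ is equivalent to orthogonality to the generators of $\pi(D)^\perp$, i.e. to $x\,\pi(G_1)\pi(H_2)^\top=0$. Because $\pi(G_1)$ has full row rank, $x\,\pi(G_1)=0$ forces $x=0$; hence $\pi(C)\cap\pi(D)=\{0\}$ iff $\pi(G_1)\pi(H_2)^\top$, equivalently its transpose $\pi(H_2)\pi(G_1)^\top$, has trivial kernel, i.e. is invertible. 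The symmetric argument with the roles of $C$ and $D$ exchanged handles $\pi(H_1)\pi(G_2)^\top$; since each product being invertible is separately equivalent to (1), the disjunction in (2) is also equivalent to (1).

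Finally, $(1)\Leftrightarrow(4)$ follows by duality. Over $\mathbb{F}_q$ one has $(\pi(C)+\pi(D))^\perp=\pi(C)^\perp\cap\pi(D)^\perp$ and $(\pi(C)\cap\pi(D))^\perp=\pi(C)^\perp+\pi(D)^\perp$ together with $\pi(C)^{\perp\perp}=\pi(C)$, so $(\pi(C),\pi(D))$ is LCP iff $(\pi(C)^\perp,\pi(D)^\perp)$ is LCP. Since the rows of $\pi(H_1)$ and $\pi(H_2)$ are generator matrices of $\pi(C)^\perp$ and $\pi(D)^\perp$, statement (4) is precisely statement (3) applied to the dual pair, and the equivalence $(1)\Leftrightarrow(3)$ already proved for that pair closes the loop. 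I expect the only delicate point to be the bookkeeping of matrix sizes and transposes; the mathematical content is routine once the dimension identity $\dim\pi(C)+\dim\pi(D)=n$ is in place.
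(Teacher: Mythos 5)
Your proof is correct, but there is nothing in the paper to compare it against: Proposition \ref{p-3.1} is imported by citation from \cite[Corollary 4]{Bhowmick23} and used as a black box, with no proof given in this paper. Your argument is therefore a genuine self-contained verification, and it is sound: the dimension identity $\dim_{\mathbb{F}_q}\pi(C)+\dim_{\mathbb{F}_q}\pi(D)=n$ reduces each of (2), (3), (4) to the single condition $\pi(C)\cap\pi(D)=\{0\}$ --- via the rank/kernel computations for (2) and (3), and via the duality identities $(U+V)^\perp=U^\perp\cap V^\perp$ and $(U\cap V)^\perp=U^\perp+V^\perp$ over $\mathbb{F}_q$ for (4). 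One fine point you glossed over: Proposition \ref{p-6} controls the quotient $C/C\texttt{J}(R)$, whereas $\pi(C)\cong C/\bigl(C\cap R^n\texttt{J}(R)\bigr)$ is a priori only a further quotient of it, so the equality $\dim_{\mathbb{F}_q}\pi(C)=k$ needs the extra observation that $C\cap R^n\texttt{J}(R)=C\texttt{J}(R)$ when $C$ is free. This does hold: if a combination $\sum_i g_ir_i$ of the free generators with some $r_i$ a unit lay in $R^n\texttt{J}(R)$, then right-multiplying by a nonzero $s\in\texttt{J}(R)^{t-1}$ (as in Lemma \ref{lm-1}) would annihilate the left-hand side while keeping $r_is\neq 0$, yielding a nontrivial relation among the $g_i$ and contradicting freeness. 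The paper makes the same identification tacitly in the note preceding the proposition (declaring $\pi(\mathrm{G})$ and $\pi(\mathrm{H})$ to be generator and parity-check matrices of $\pi(C)$), so your proof matches the paper's level of rigor once this one observation is inserted; including it explicitly would make your argument strictly more careful than the source it replaces.
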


\begin{thm}
 Let $C$ and $D$ be two free linear codes in $R^n$ with generator matrices $\mathrm{G}_1,$  $\mathrm{G}_2$ and parity check matrix $\mathrm{H}_1,$ $\mathrm{H}_2$ respectively, with the condition $|C||D|=|R^n|.$ Then the pair $(C,D)$ is LCP if and only if $\mathrm{H}_2\mathrm{G}_1^\top$ or $\mathrm{H}_1\mathrm{G}_2^\top$ are invertible.
\end{thm}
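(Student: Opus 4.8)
The plan is to push the whole statement down to the residue field $\mathbb{F}_q = R/\texttt{J}(R)$ through the reduction map $\pi$ and then to splice together equivalences already available in the excerpt. The one conceptual point to notice at the outset is that, although Proposition \ref{p-3.1} is phrased for codes over a finite commutative Frobenius ring, the field $\mathbb{F}_q$ is itself such a ring; hence the proposition applies verbatim to the pair $(\pi(C),\pi(D))$ of $\mathbb{F}_q$-codes, which is exactly where the cross-products $\pi(\mathrm{H}_2)\pi(\mathrm{G}_1)^\top$ and $\pi(\mathrm{H}_1)\pi(\mathrm{G}_2)^\top$ live.

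First I would invoke Theorem \ref{th-2}: since $C$ and $D$ are free, $(C,D)$ is LCP if and only if $(\pi(C),\pi(D))$ is LCP. Next I would verify the hypotheses needed to run Proposition \ref{p-3.1} over $\mathbb{F}_q$. Writing $k_1,k_2$ for the ranks of $C,D$, the assumption $|C||D|=|R^n|$ forces $k_1+k_2=n$; since $\pi(C),\pi(D)$ have $\mathbb{F}_q$-dimensions $k_1,k_2$ (the rows of $\pi(\mathrm{G}_1)$, resp. $\pi(\mathrm{G}_2)$, form bases by Proposition \ref{p-6}), this yields $|\pi(C)||\pi(D)| = q^{k_1+k_2} = |\mathbb{F}_q^n|$. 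By the remark following Theorem \ref{th-2}, $\pi(\mathrm{G}_1),\pi(\mathrm{G}_2)$ and $\pi(\mathrm{H}_1),\pi(\mathrm{H}_2)$ are generator and parity-check matrices of $\pi(C),\pi(D)$. The equivalence of items (1) and (2) in Proposition \ref{p-3.1} then gives: $(\pi(C),\pi(D))$ is LCP if and only if $\pi(\mathrm{H}_2)\pi(\mathrm{G}_1)^\top$ or $\pi(\mathrm{H}_1)\pi(\mathrm{G}_2)^\top$ is invertible over $\mathbb{F}_q$.

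It then remains to translate this field-level invertibility back to $R$. Because $\pi$ is a ring homomorphism it commutes entrywise with matrix multiplication and with transposition, so $\pi(\mathrm{H}_2\mathrm{G}_1^\top)=\pi(\mathrm{H}_2)\pi(\mathrm{G}_1)^\top$ and $\pi(\mathrm{H}_1\mathrm{G}_2^\top)=\pi(\mathrm{H}_1)\pi(\mathrm{G}_2)^\top$; here I would record that the identity $(AB)_{ij}=\sum_k A_{ik}B_{kj}$ survives passage through $\pi$ irrespective of whether $R$ is commutative, since $\pi$ preserves the left-to-right order of the factors. By Definition \ref{def-3.1}, $\mathrm{H}_2\mathrm{G}_1^\top$ is invertible over $R$ precisely when $\pi(\mathrm{H}_2\mathrm{G}_1^\top)$ is invertible over $\mathbb{F}_q$, and likewise for $\mathrm{H}_1\mathrm{G}_2^\top$. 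Chaining the three equivalences $(C,D)\text{ LCP}\Leftrightarrow(\pi(C),\pi(D))\text{ LCP}\Leftrightarrow$ field-level invertibility $\Leftrightarrow$ invertibility over $R$ finishes the proof.

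Rather than a deep obstacle, the real work is the bookkeeping around non-commutativity and matrix shapes, and I would make two explicit checks. First, that $\mathrm{H}_2\mathrm{G}_1^\top$ and $\mathrm{H}_1\mathrm{G}_2^\top$ are genuinely square so that Definition \ref{def-3.1} even applies: from $k_1+k_2=n$ the matrix $\mathrm{H}_2\mathrm{G}_1^\top$ has size $(n-k_2)\times k_1=k_1\times k_1$ and $\mathrm{H}_1\mathrm{G}_2^\top$ has size $k_2\times k_2$. Second, that $\pi(AB)=\pi(A)\pi(B)$ uses only additivity and multiplicativity of $\pi$ with the factor order intact, so non-commutativity of $R$ never intervenes. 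With these in hand the argument is complete; notably, the same transport is exactly what makes the result valid simultaneously over finite fields, finite commutative Frobenius rings, and the non-commutative setting treated here.
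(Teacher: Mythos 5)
Your proposal is correct and follows essentially the same route as the paper's own proof: reduce to the residue field via $\pi$, apply Theorem \ref{th-2} to transfer the LCP property, use Proposition \ref{p-3.1} over $\mathbb{F}_q$, and translate invertibility back through Definition \ref{def-3.1} using $\pi(\mathrm{H}_2\mathrm{G}_1^\top)=\pi(\mathrm{H}_2)\pi(\mathrm{G}_1)^\top$. The only differences are cosmetic — the paper argues the forward direction by contradiction whereas you chain equivalences directly, and you additionally verify details the paper leaves implicit (squareness of the cross-products, the cardinality hypothesis of Proposition \ref{p-3.1}, and that non-commutativity does not disturb $\pi(AB)=\pi(A)\pi(B)$).
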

\begin{proof}
 Suppose the pair $(C,D)$ is an LCP of codes, i.e., $C\cap D=\{0\}$ and $C+D=R^n.$ If possible let $\mathrm{H}_2\mathrm{G}_1^\top$ is not invertible over $R$. From Definition \ref{def-3.1}, $\pi(\mathrm{H}_2\mathrm{G}_1^\top)$ is not invertible over $\mathbb{F}_q.$ Since, $\pi$ is a ring homomorphism, which implies $\pi(\mathrm{H}_2\mathrm{G}_1^\top)=\pi(\mathrm{H}_2) \pi(\mathrm{G}_1)^\top.$ Using the Proposition \ref{p-3.1}, we obtain that $(\pi(C),\pi(D))$ can not be an LCP of codes. Therefore, by the Theorem \ref{th-2}, we get that $(C, D)$ can not be an LCP of codes, which is a contradiction. Thus, $\mathrm{H}_2\mathrm{G}_1^\top$ or $\mathrm{H}_1\mathrm{G}_2^\top$ are invertible. \\
 Conversely, let us assume that $\mathrm{H}_2\mathrm{G}_1^\top$ or $\mathrm{H}_1\mathrm{G}_2^\top$ are invertible. By the Definition \ref{def-3.1}, $\pi(\mathrm{H}_2\mathrm{G}_1^\top)$ is invertible over $\mathbb{F}_q.$ Since, $\pi$ is a ring homomorphism, which implies $\pi(\mathrm{H}_2\mathrm{G}_1^\top)=\pi(\mathrm{H}_2) \pi(\mathrm{G}_1)^\top.$ This implies that $(\pi(C), \pi(D))$ is LCP by using Proposition \ref{p-3.1}. By the Theorem \ref{th-2}, we obtain that the pair $(C, D)$ forms an LCP of codes over $R.$
 \end{proof}

\begin{thm}
 Let $C$ and $D$ be two free linear codes in $R^n$ with generator matrices $\mathrm{G}_1,$  $\mathrm{G}_2$ and parity check matrix $\mathrm{H}_1,$ $\mathrm{H}_2$ respectively, with the condition $|C||D|=|R^n|.$ Then the
following are equivalent
\begin{itemize}
\item[(1)] the pair $(C, D)$ form LCP; \item[(2)] $\left(
{\begin{array}{ccc}
   \mathrm{G}_1 \\
   \mathrm{G}_2 \\
   \end{array} } \right)$ is invertible over $R$;
\item[(3)] $\left( {\begin{array}{ccc}
   \mathrm{H}_1 \\
   \mathrm{H}_2 \\
   \end{array} } \right)$ is invertible over $R$.
\end{itemize}
\end{thm}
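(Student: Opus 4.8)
The plan is to push all three conditions down to the residue field $\mathbb{F}_q$ through the reduction map $\pi$ and then appeal to Proposition \ref{p-3.1}. The key point is that, although Proposition \ref{p-3.1} is phrased for commutative Frobenius rings, the field $\mathbb{F}_q=R/\texttt{J}(R)$ is itself a finite commutative Frobenius ring, so its four equivalent conditions apply verbatim to the reduced pair $(\pi(C),\pi(D))$.

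First I would set up three ``transport'' dictionaries. Since $C$ and $D$ are free with $|C||D|=|R|^n$, they have ranks $k$ and $n-k$; the stacked matrices $\left(\begin{smallmatrix}\mathrm{G}_1\\\mathrm{G}_2\end{smallmatrix}\right)$ and $\left(\begin{smallmatrix}\mathrm{H}_1\\\mathrm{H}_2\end{smallmatrix}\right)$ are therefore genuine $n\times n$ matrices, so invertibility over $R$ is meaningful. Because $\pi$ acts entrywise it commutes with stacking, i.e.\ $\pi\!\left(\begin{smallmatrix}\mathrm{G}_1\\\mathrm{G}_2\end{smallmatrix}\right)=\left(\begin{smallmatrix}\pi(\mathrm{G}_1)\\\pi(\mathrm{G}_2)\end{smallmatrix}\right)$, and likewise for the $\mathrm{H}_i$. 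By Definition \ref{def-3.1}, condition (2) is then equivalent to $\left(\begin{smallmatrix}\pi(\mathrm{G}_1)\\\pi(\mathrm{G}_2)\end{smallmatrix}\right)$ being invertible over $\mathbb{F}_q$, and condition (3) to $\left(\begin{smallmatrix}\pi(\mathrm{H}_1)\\\pi(\mathrm{H}_2)\end{smallmatrix}\right)$ being invertible over $\mathbb{F}_q$. By Theorem \ref{th-2}, condition (1) is equivalent to $(\pi(C),\pi(D))$ being an LCP of codes in $\mathbb{F}_q^n$.

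Next I would verify the hypotheses needed to invoke Proposition \ref{p-3.1} over $\mathbb{F}_q$. By the remark following Theorem \ref{th-2}, $\pi(\mathrm{G}_i)$ and $\pi(\mathrm{H}_i)$ are generator and parity-check matrices of $\pi(C)$ and $\pi(D)$; these reduced codes are automatically free over the field $\mathbb{F}_q$, of dimensions $k$ and $n-k$, so $|\pi(C)||\pi(D)|=q^{k}q^{\,n-k}=q^{n}=|\mathbb{F}_q^n|$, which is exactly the counting hypothesis of Proposition \ref{p-3.1}. Applying the equivalences (1)$\Leftrightarrow$(3) and (1)$\Leftrightarrow$(4) of that proposition shows that $(\pi(C),\pi(D))$ is LCP precisely when $\left(\begin{smallmatrix}\pi(\mathrm{G}_1)\\\pi(\mathrm{G}_2)\end{smallmatrix}\right)$ is invertible over $\mathbb{F}_q$, and precisely when $\left(\begin{smallmatrix}\pi(\mathrm{H}_1)\\\pi(\mathrm{H}_2)\end{smallmatrix}\right)$ is invertible over $\mathbb{F}_q$. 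Chaining the three dictionaries through this common equivalence yields (1)$\Leftrightarrow$(2)$\Leftrightarrow$(3).

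I do not expect a genuine obstacle here: the whole argument is a reduction modulo $\texttt{J}(R)$ resting on results already established (Definition \ref{def-3.1}, Theorem \ref{th-2}, the generator/parity-check remark, and Proposition \ref{p-3.1}). The only points requiring care are bookkeeping: confirming that reduction commutes with stacking, that the rank count makes both stacked matrices square of size $n$ so that ``invertible'' is well defined, and that invoking the commutative Proposition \ref{p-3.1} is legitimate because it is applied only over the field $\mathbb{F}_q$.
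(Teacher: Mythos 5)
Your proposal is correct and follows essentially the same route as the paper: both arguments reduce modulo $\texttt{J}(R)$, translate invertibility over $R$ into invertibility of the stacked reduced matrices over $\mathbb{F}_q$ via Definition \ref{def-3.1}, and then chain Theorem \ref{th-2} with Proposition \ref{p-3.1} applied to $(\pi(C),\pi(D))$. If anything, your write-up is slightly more complete than the paper's, which proves only (1)$\Leftrightarrow$(2) explicitly and leaves (1)$\Leftrightarrow$(3) to the reader, whereas you also verify the counting hypothesis $|\pi(C)||\pi(D)|=q^n$ and justify applying the commutative Proposition \ref{p-3.1} over the field $\mathbb{F}_q$.
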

\begin{proof}
  We shall prove this theorem in the following sequence $(1)$ implies $(2),$  $(2)$ implies $(1)$ and $(1)$ implies $(3),$  $(3)$ implies $(1).$ We only need to show that $(1)$ implies $(2)$  and vice-versa. The reader can prove another consequence similarly.\\
  For the proof, $(1)$ implies $(2),$ let us suppose that the pair $(C, D)$ forms LCP. If possible, let us suppose that  $\left(
{\begin{array}{ccc}
   \mathrm{G}_1 \\
   \mathrm{G}_2 \\
   \end{array} } \right)$ is not invertible over $R.$  From Definition \ref{def-3.1}, $\pi\left(
{\begin{array}{ccc}
   \mathrm{G}_1 \\
   \mathrm{G}_2 \\
   \end{array} } \right)$ is not invertible over $\mathbb{F}_q.$ Since, $\pi$ is a ring homomorphism, which implies $\pi\left(
{\begin{array}{ccc}
   \mathrm{G}_1 \\
   \mathrm{G}_2 \\
   \end{array} } \right)=\left(
{\begin{array}{ccc}
   \pi(\mathrm{G}_1) \\
   \pi(\mathrm{G}_2) \\
   \end{array} } \right).$ Then the matrix $\left(
{\begin{array}{ccc}
   \pi(\mathrm{G}_1) \\
   \pi(\mathrm{G}_2) \\
   \end{array} } \right)$ is not invertible over $\mathbb{F}_q.$ Hence the pair $(\pi(C),\pi(D))$ can not be an LCD, follows from Proposition \ref{p-3.1}. This  contradict to the fact the Theorem \ref{th-2}.\\
  For the proof of $(2)$ implies $(1),$ let us suppose that $\left(
{\begin{array}{ccc}
   \mathrm{G}_1 \\
   \mathrm{G}_2 \\
   \end{array} } \right)$ is invertible over $R.$ From Definition \ref{def-3.1}, $\pi\left(
{\begin{array}{ccc}
   \mathrm{G}_1 \\
   \mathrm{G}_2 \\
   \end{array} } \right)$ is invertible over $\mathbb{F}_q.$ Since, $\pi$ is a ring homomorphism, which implies $\pi\left(
{\begin{array}{ccc}
   \mathrm{G}_1 \\
   \mathrm{G}_2 \\
   \end{array} } \right)=\left(
{\begin{array}{ccc}
   \pi(\mathrm{G}_1) \\
   \pi(\mathrm{G}_2) \\
   \end{array} } \right).$ Then the matrix $\left(
{\begin{array}{ccc}
   \pi(\mathrm{G}_1) \\
   \pi(\mathrm{G}_2) \\
   \end{array} } \right)$ is invertible over $\mathbb{F}_q.$ Hence the pair $(\pi(C),\pi(D))$ forms an LCD, follows from Proposition \ref{p-3.1}. Therefore, by the Theorem \ref{th-2}, the pair $(C,D)$ is an LCP.
\end{proof}

\section{LCP codes over a finite non-commutative Frobenius ring}
In this section, we used the symbol $R$ as a finite non-commutative Frobenius ring. For the codes $C$ and $D$, the pair $(C,D)$ is called LCP if $C\oplus D=R^n.$ This gives that $C$ and $D$ both are projective right $R$-module. Towards this, $C$ and $D$ are injective right $R$-module, it follows from Proposition \ref{th-0.02}. Consequently, if a pair $(C,D)$ is LCP, then $C$ and $D$ both are injective right $R$-module. If $C$ and $D$ both are injective right $R$-module, the pair $(C,D)$ may or may not be LCP. Readers can find an easy example. But, we claim that if injective hull of $C\oplus D$ is $R^n,$ then the pair $(C,D)$ forms LCP. We will answer this question later. Now we will introduce an example of LCP codes $(C,D)$ over $R.$ Note that both the code $C$ and $D$ are non-free right $R$-submodule of $R^n.$
\begin{exam}\label{ex-4.1}
 Let $\mathbb{F}_q$ be a finite field with $q$ elements. Let $R$ be the $4$-dimensional $\mathbb{F}_q$-ring consisting of matrices  of the form $$\left(
{\begin{array}{cccc}
   a & x & 0 & 0\\
   0 & a & 0 & 0\\
   0 & 0 & b & y\\
   0 & 0 & 0 & b\\
   \end{array} } \right),$$ where $a,b,x,y\in \mathbb{F}_q.$ The Jacobson radical of $R$ is the collection of matrices of the form
   $$\left(
{\begin{array}{cccc}
   0 & x & 0 & 0\\
   0 & 0 & 0 & 0\\
   0 & 0 & 0 & y\\
   0 & 0 & 0 & 0\\
   \end{array} } \right).$$ Reader can check that $R/\texttt{J}(R)\simeq \mathbb{F}_q \times \mathbb{F}_q,$ under the mapping $\left(
{\begin{array}{cccc}
   a & x & 0 & 0\\
   0 & a & 0 & 0\\
   0 & 0 & b & y\\
   0 & 0 & 0 & b\\
   \end{array} } \right)\mapsto (a,b).$ Therefore, $R$ is a finite non-commutative Frobenius ring. Let us consider $e_1, e_2\in R$ such that $e_1R$ and $e_2R$ forms a linear codes over $R,$ where
   $e_1=\left(
{\begin{array}{cccc}
   1 & 0 & 0 & 0\\
   0 & 1 & 0 & 0\\
   0 & 0 & 0 & 0\\
   0 & 0 & 0 & 0\\
   \end{array} } \right)$ and $e_2=\left(
{\begin{array}{cccc}
   0 & 0 & 0 & 0\\
   0 & 0 & 0 & 0\\
   0 & 0 & 1 & 0\\
   0 & 0 & 0 & 1\\
   \end{array} } \right).$ Let us choose $C=e_1R$ and $D=e_2R.$ Since $e_1$ and $e_2$ both are idempotent and $e_1+e_2=\textbf{1},$ where $\textbf{1}=\left(
{\begin{array}{cccc}
   1 & 0 & 0 & 0\\
   0 & 1 & 0 & 0\\
   0 & 0 & 1 & 0\\
   0 & 0 & 0 & 1\\
   \end{array} } \right)$ is identity in $R.$ Thus, $C\oplus D=R.$ Hence the pair $(C, D)$ forms an LCP. But neither $C$ nor $D$ are free. Otherwise, $C=R,$ which is absurd.
\end{exam}
In Example \ref{ex-4.1}, we have seen that non-free LCP codes exist over finite Frobenius rings. Now, we will characterize LCP codes over finite non-commutative Frobenius rings using advanced module theory.
\begin{defi}
  Let $C$ be a right $R$-submodule of $R^n.$ $C$ is said to be an essential submodule of $R^n$ if for every non-zero right $R$-submodule $M$ of $R^n$ such that $C\cap M\neq 0.$
\end{defi}
\begin{lem}\label{lm-4.1}
 Let $C$ and $D$ be two linear codes in $R^n.$ If the pair $(C,D)$ is an LCP, then $C\oplus D$ is essential submodule of $R^n.$
\end{lem}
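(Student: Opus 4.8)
The plan is to unwind the two definitions involved and observe that the conclusion then drops out immediately. First I would recall that, under the convention adopted at the start of this section, the hypothesis that the pair $(C,D)$ is an LCP means precisely that $C\oplus D=R^n$; in particular the (internal) direct sum $C\oplus D$ coincides, as a right $R$-submodule of $R^n$, with the whole ambient module $R^n$. So the only thing to verify is that $R^n$ is essential in $R^n$.

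Next I would invoke the definition of an essential submodule just given: a right $R$-submodule $N$ of $R^n$ is essential provided $N\cap M\neq 0$ for every non-zero right $R$-submodule $M$ of $R^n$. Applying this with $N=C\oplus D=R^n$, for an arbitrary non-zero submodule $M\subseteq R^n$ one has $(C\oplus D)\cap M=R^n\cap M=M$, which is non-zero by the choice of $M$. Hence $C\oplus D$ meets every non-zero submodule of $R^n$ non-trivially, and is therefore an essential submodule of $R^n$.

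Since every module is trivially an essential submodule of itself, there is no real obstacle in this argument; the only point that requires a little care is to match the section's definition of LCP (namely the direct-sum decomposition $C\oplus D=R^n$, rather than the two separate conditions $C\cap D=\{0\}$ and $C+D=R^n$) so that the equality $C\oplus D=R^n$ is available on the nose. I would regard this lemma mainly as the elementary first half of a characterization: it records the easy implication LCP $\Rightarrow$ $C\oplus D$ essential, which is presumably meant to be paired later with an injective-hull criterion. Indeed, an injective module is its own injective hull, and $R^n$ is injective because $R$ is self-injective Frobenius, so this lemma sets up the converse direction where the essentiality of $C\oplus D$ inside its injective hull $R^n$ will be the operative hypothesis.
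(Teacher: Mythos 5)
Your proof is correct and follows essentially the same route as the paper's: both arguments reduce the claim to the observation that $C\oplus D=R^n$ by the definition of LCP, and then note that $R^n\cap M=M\neq 0$ for any non-zero submodule $M$ (the paper phrases this contrapositively, you phrase it directly). The extra remarks you add about the injective-hull converse match the role this lemma plays later in the paper.
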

\begin{proof}
Since, the pair $(C,D)$ forms an LCP, i.e., $C\oplus D=R^n.$ It means if there exists a right $R$-sbmodule $M$ of $R^n$ such that $(C\oplus D)\cap M=\{0\},$ which implies $M=0.$ Thus, $C\oplus D$ is essential submodule of $R^n.$
\end{proof}
For Reader convenience, we will find converse of the above lemme later. Before, we will recall some preliminary that have an important role in finding the answer.
\begin{defi}
 Let $C$ be a right $R$-submodule of $R^n.$ A right $R$-submodule $I$ is said to be an injective hull of $C$ if $I$ is injective module and $C$ is an essential submodule of $I.$ We denote injective hull of $C$ as $E(C)=I$
\end{defi}
\begin{thm}
 Let $C$ and $D$ be two linear codes in $R^n.$ Then the pair $(C,D)$ is LCP if and only if $C$ and $D$ both are injective with $E(C\oplus D)=R^n.$
\end{thm}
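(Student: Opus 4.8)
The plan is to prove the two directions separately, exploiting the structure theory assembled in the excerpt. The forward direction is nearly immediate: if the pair $(C,D)$ is LCP, then $C\oplus D=R^n$, so $C$ and $D$ are projective right $R$-modules as direct summands of the free module $R^n$; by Proposition \ref{th-0.02} (the Frobenius characterization), projective right ideals are injective, and more generally each direct summand of $R^n$ is projective, hence injective. To be careful here I would invoke Proposition \ref{th-1.001}(b): since $R^n=C\oplus D$ is injective (being free, hence projective, hence injective over a Frobenius ring), each summand $C$ and $D$ is injective. Finally, $R^n$ is itself injective and contains $C\oplus D=R^n$ as an essential submodule trivially (a module is always essential in itself), so by definition $E(C\oplus D)=R^n$.

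The converse is the substantive direction. Here I assume $C$ and $D$ are both injective and $E(C\oplus D)=R^n$, and I must deduce $C\oplus D=R^n$ together with $C\cap D=\{0\}$. The key structural fact is that an injective submodule of any module splits off as a direct summand. First I would argue $C\cap D=\{0\}$: the notation $E(C\oplus D)$ presupposes that $C+D$ is actually a direct sum, so I would either take this as part of the hypothesis or, more honestly, establish it — since $C$ is injective, the inclusion $C\hookrightarrow R^n$ splits, giving $R^n=C\oplus C'$ for some complement $C'$, and I would want to relate $C'$ to $D$. Next, because $C\oplus D$ is injective (a direct sum of two injectives is injective by Proposition \ref{th-1.001}(b)), the inclusion $C\oplus D\hookrightarrow R^n$ splits, so $R^n=(C\oplus D)\oplus N$ for some submodule $N$. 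The hypothesis $E(C\oplus D)=R^n$ forces $C\oplus D$ to be essential in $R^n$, meaning $(C\oplus D)\cap M\neq 0$ for every nonzero submodule $M$; applying this to $M=N$ and using $(C\oplus D)\cap N=\{0\}$ yields $N=0$, whence $C\oplus D=R^n$, i.e., the pair is LCP.

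The main obstacle I anticipate is the splitting argument in the converse, specifically justifying that an injective submodule is a direct summand in this noncommutative setting. The clean statement is: if $I$ is an injective submodule of a module $B$, then $I$ is a direct summand of $B$ — this follows from the injectivity defining property (Proposition \ref{th-0.0001} gives Baer's criterion, and the standard consequence is that injectives split off) by lifting the identity map $I\to I$ along the inclusion $I\hookrightarrow B$. I would state this as a short lemma, derive it from the definition of injectivity given in Section 2 (take $g$ to be the inclusion $I\hookrightarrow B$ and $h=\mathrm{id}_I$, obtaining a retraction $h'\colon B\to I$, whose kernel is the desired complement), and then apply it to $C\oplus D\hookrightarrow R^n$. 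The second delicate point is making precise the role of $E(C\oplus D)=R^n$: it encodes both that $R^n$ is injective (automatic) and that $C\oplus D$ is essential in $R^n$, and it is the essentiality that kills the complement $N$. Once the splitting lemma is in hand, the essentiality argument from Lemma \ref{lm-4.1} runs in reverse with no further difficulty.
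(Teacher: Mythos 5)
Your proof is correct. The forward direction coincides with the paper's: injectivity of $C$ and $D$ follows from the Frobenius property together with Proposition~\ref{th-1.001}(b), and essentiality of $C\oplus D=R^n$ in $R^n$ is trivial, exactly as in the paper's appeal to Lemma~\ref{lm-4.1}. In the converse, however, you take a genuinely more self-contained route. The paper disposes of it in one line, $C\oplus D=E(C)\oplus E(D)=E(C\oplus D)=R^n$, which silently invokes two facts proved nowhere in the paper: that an injective module is its own injective hull, and that the injective hull of a finite direct sum is the direct sum of the hulls. You instead derive everything from the material actually stated in Section 2: $C\oplus D$ is injective by Proposition~\ref{th-1.001}(b); extending $\mathrm{id}_{C\oplus D}$ along the inclusion $C\oplus D\hookrightarrow R^n$ yields a retraction whose kernel $N$ satisfies $R^n=(C\oplus D)\oplus N$; and the essentiality encoded in $E(C\oplus D)=R^n$ forces $N=\{0\}$. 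This splitting-plus-essentiality argument is precisely the standard proof that injective modules admit no proper essential extensions, so in effect you have unwound the paper's black box, at the cost of a slightly longer argument but with the benefit that every step rests on cited results (Proposition~\ref{th-0.0001} and Proposition~\ref{th-1.001}). One caution: your side remark about "more honestly establishing" $C\cap D=\{0\}$ from injectivity of $C$ alone cannot be completed --- two injective submodules of $R^n$ may well intersect nontrivially --- so the correct resolution is the one you ultimately adopt (and the paper tacitly assumes): the notation $C\oplus D$ in the hypothesis already carries $C\cap D=\{0\}$ with it.
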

\begin{proof}
  Let us suppose that the pair $(C,D)$ is an LCP, i.e., $C\oplus D=R^n.$ Therefore, $C$ and $D$ are injective. By the Lemma \ref{lm-4.1}, $C\oplus D$ is essential submodule of $R^n,$ thus $E(C\oplus D)=R^n.$ \\
  Conversely, let us suppose that $C$ and $D$ both are injective with $E(C\oplus D)=R^n.$ i.e., $E(C)=C$ and $E(D)=D$ also $C\oplus D=E(C)\oplus E(D)=E(C\oplus D)=R^n.$ Thus, $(C, D)$ forms LCP.
\end{proof}
\begin{defi}
Let $C$ and $D$ be two right $R$-submodule of $R^n.$ $D$ is said to be a complement submodule of $C$ in $R^n$ if $D$ is a right $R$-submodule of $R^n$ maximal with respect to $C\cap D=\{0\}.$
\end{defi}
By the routine application of Zorn's lemma, we see that for a given right $R$-submodule $C$ of $R^n$ there always exists a maximal submodule $D$ of $R^n$ such that $C\cap D=\{0\}.$ Consequently, $C\oplus D$ is a right $R$-submodule of $R^n.$
\begin{lem}\label{lm-4.2}
 Let $C$ and $D$ be two linear codes in $R^n.$ If the pair $(C,D)$ is an LCP, then $C$ is a complement submodule of $D$ in $R^n$ and $D$ is a complement submodule of $C$ in $R^n.$
\end{lem}
\begin{proof}
Since, the pair $(C,D)$ forms an LCP, i.e., $C\oplus D=R^n.$ It means $C+D=R^n$ and $C\cap D=\{0\}.$ This shows that $C$ is a complement submodule of $D$ in $R^n$ and $D$ is a complement submodule of $C$ in $R^n.$
\end{proof}
Note that for a right $R$-submodule $C$ of $R^n$ is essential submodule of $R^n$ if and only if $0$ is a maximal submodule of $R^n.$
\begin{defi}\label{de-4.2}
 Let $C$ be a right $R$-submodule of $R^n.$ $C$ is said to be a closed in $R^n$ if $C$ is the complement of some submodule of $R^n.$
\end{defi}
A  right $R$-submodule of $R^n$ is closed, provided $C$ has no proper essential submodule in $R^n.$ Moreover, if $D$ is any right $R$-submodule of $R^n,$ then there exists, by Zorn's Lemma, a submodule $K$ of $R^n$ maximal with respect to the property that $D$ is an essential submodule of $C$ and in this case, $C$ is a closed submodule of $R^n.$
\begin{lem}\label{lm-10}
 Let $C$ be a right $R$-submodule of $R^n.$ If $C$ is essential submodule of $R^n$ and closed submodule of $R^n.$ Then $C=R^n.$
\end{lem}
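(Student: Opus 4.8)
The plan is to peel apart the two hypotheses and show they are jointly consistent only when $C$ exhausts $R^n$. By Definition \ref{de-4.2}, the closedness of $C$ means that $C$ is a complement submodule of some right $R$-submodule $K$ of $R^n$; unwinding the definition of complement submodule, this says that $C$ is maximal among the submodules $D$ of $R^n$ satisfying $K\cap D=\{0\}$, and in particular $K\cap C=\{0\}$.

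First I would use the essentiality of $C$ to pin down the witness $K$. Since $C$ is an essential submodule of $R^n$, every nonzero right $R$-submodule of $R^n$ meets $C$ nontrivially. If $K\neq\{0\}$, essentiality would force $C\cap K\neq\{0\}$, contradicting $K\cap C=\{0\}$. Hence $K=\{0\}$.

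The final step exploits this degeneracy. For $K=\{0\}$ the eligibility condition in the complement definition, namely $K\cap D=\{0\}$, holds for every submodule $D$ of $R^n$, since $\{0\}\cap D=\{0\}$ automatically. Thus the clause ``$C$ is maximal with respect to $K\cap C=\{0\}$'' degenerates to the statement that $C$ is a maximal element among \emph{all} submodules of $R^n$, which can only be $R^n$ itself. Therefore $C=R^n$.

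I expect the only genuine obstacle to be careful bookkeeping with the definitions: recognizing that closedness supplies a witnessing submodule $K$, that essentiality collapses this witness to $\{0\}$, and that the complement maximality then becomes vacuous and forces $C=R^n$. As a sanity check, one may rephrase the argument in the equivalent language of essential extensions: a complement (closed) submodule admits no proper essential extension inside $R^n$, while $C$ being essential in $R^n$ exhibits $R^n$ as an essential extension of $C$, so the two conditions again yield $C=R^n$.
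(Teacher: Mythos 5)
Your proof is correct and takes essentially the same route as the paper's: extract the witness submodule $K$ from Definition \ref{de-4.2}, use essentiality of $C$ to force $K=\{0\}$, and observe that the maximality clause in the complement definition then collapses to $C=R^n$. If anything, your write-up is more careful than the paper's own three-line sketch, which leaves the final maximality step implicit (and even states the witness backwards, saying ``$N$ is a complement of $C$'' where the definition of closed requires $C$ to be a complement of $N$).
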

\begin{proof}
 Since $C$ is an essential submodule of $R^n,$ then for every non-zero right $R$-submodule of $R^n$ such that $C\cap M\neq\{0\}.$ Again, $C$ is closed then there exist some right $R$-submodule $N$ such that $N$ is a complement of $C.$ This force that $C=R^n.$
\end{proof}
\begin{lem}
Let $C$ and $D$ be two linear codes in $R^n.$ If the pair $(C,D)$ is an LCP, then $C$ and $D$ both are closed submodule of $R^n.$
\end{lem}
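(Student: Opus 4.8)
The plan is to read the conclusion off directly from the complement property already recorded in Lemma \ref{lm-4.2}, together with the definition of a closed submodule in Definition \ref{de-4.2}. Recall that the hypothesis that $(C,D)$ is an LCP means precisely $C\oplus D=R^n$, i.e. $C+D=R^n$ and $C\cap D=\{0\}$. Lemma \ref{lm-4.2} has already used exactly this to conclude that $C$ is a complement submodule of $D$ in $R^n$ and, symmetrically, that $D$ is a complement submodule of $C$ in $R^n$. Thus the two ingredients I need are in place before the proof even begins, and the work reduces to matching them against the definition of closedness.

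Next I would invoke Definition \ref{de-4.2}, which declares a right $R$-submodule of $R^n$ to be closed exactly when it occurs as the complement of some submodule of $R^n$. Applying this with the first half of Lemma \ref{lm-4.2}: since $C$ is the complement submodule of $D$, the submodule $C$ is the complement of a submodule of $R^n$ (namely $D$), so $C$ is closed. Applying it with the second half: since $D$ is the complement submodule of $C$, the submodule $D$ is the complement of $C$, so $D$ is closed as well. This settles both assertions.

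The main point to verify, and the only place any care is needed, is that the notion of ``complement submodule'' appearing in Lemma \ref{lm-4.2} coincides with the notion used in Definition \ref{de-4.2}; this holds by construction, since both refer to maximality with respect to having trivial intersection, so there is no genuine obstacle here and the statement is an immediate corollary. If a more self-contained argument were desired, one could instead show directly that $C$ admits no proper essential extension inside $R^n$: any submodule $C'$ with $C\subsetneq C'\subseteq R^n$ contains an element outside $C$, which must have nonzero $D$-component under the decomposition $R^n=C\oplus D$, forcing $C'\cap D\neq\{0\}$ and hence $C$ to be maximal with $C\cap D=\{0\}$. I would, however, favour the short route through Lemma \ref{lm-4.2} and Definition \ref{de-4.2}, keeping the argument to a single line for each of $C$ and $D$.
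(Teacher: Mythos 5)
Your proposal is correct and follows essentially the same route as the paper: the paper likewise observes that the LCP condition makes $C$ a complement of $D$ and $D$ a complement of $C$ (the content of Lemma \ref{lm-4.2}), and then concludes closedness directly from Definition \ref{de-4.2}. Your version is if anything slightly tighter, since you cite Lemma \ref{lm-4.2} explicitly and also sketch the underlying maximality argument that the paper leaves implicit.
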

\begin{proof}
Since, the pair $(C,D)$ forms an LCP, i.e., $C\oplus D=R^n.$ It means $|C||D|=|R^n|$ and $C\cap D=\{0\}.$ Thus, $C$ is a complement of $D$ and $D$ is also complement of $C.$ This shows that $C$ and $D$ both are closed submodule of $R^n.$
\end{proof}
\begin{thm}
Let $C$ and $D$ be two linear codes in $R^n.$ Then the pair $(C,D)$ is LCP if and only if $D$ is complement of $C$ such that $C\oplus D$ is closed.
\end{thm}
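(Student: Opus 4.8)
The plan is to prove both implications by reducing everything to two facts already available in the paper: the characterization of an LCP pair through the complement relation (Lemma \ref{lm-4.2}), and the rigidity statement that an essential \emph{and} closed submodule must be the whole ambient module (Lemma \ref{lm-10}).

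For the forward implication, I would assume $(C,D)$ is LCP, so that $C\oplus D=R^n$. Lemma \ref{lm-4.2} immediately yields that $D$ is a complement of $C$. It then remains only to observe that $C\oplus D=R^n$ is closed in the sense of Definition \ref{de-4.2}: since $R^n\cap M=M$ for every submodule $M$, the module $R^n$ is maximal with respect to having trivial intersection with $\{0\}$, hence $R^n$ is the complement of $\{0\}$ and therefore closed. This disposes of the easy direction with essentially no computation.

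The substance lies in the converse. Assume $D$ is a complement of $C$ and that $C\oplus D$ is closed; the goal is to show $C\oplus D=R^n$. In view of Lemma \ref{lm-10}, it suffices to prove that $C\oplus D$ is an \emph{essential} submodule of $R^n$, for then the combination essential-plus-closed forces $C\oplus D=R^n$, which is exactly the LCP condition. So the whole converse collapses to deriving essentiality of $C\oplus D$ from the maximality built into the word ``complement.''

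This essentiality step is the heart of the argument and the one genuinely nontrivial move. I would argue by contradiction: if $C\oplus D$ were not essential, there would be a nonzero submodule $N$ of $R^n$ with $(C\oplus D)\cap N=\{0\}$. I would then show that $D\oplus N$ still meets $C$ trivially. Indeed, for any $x\in C\cap(D\oplus N)$, writing $x=d+n$ with $d\in D$ and $n\in N$ gives $n=x-d\in(C\oplus D)\cap N=\{0\}$, hence $n=0$ and then $x=d\in C\cap D=\{0\}$; thus $C\cap(D\oplus N)=\{0\}$. But $D\oplus N$ properly contains $D$ (since $N\neq 0$ and $N\cap D=\{0\}$), contradicting the maximality of $D$ as a complement of $C$. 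Therefore $C\oplus D$ is essential, and Lemma \ref{lm-10} completes the proof. The only delicate point is precisely this passage from maximality to essentiality; once it is secured, both directions follow mechanically.
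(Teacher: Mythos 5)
Your proof is correct and follows essentially the same route as the paper: both directions use Lemma \ref{lm-4.2} plus the observation that $R^n$ is a complement of $\{0\}$ for the forward implication, and both reduce the converse to showing $C\oplus D$ is essential (via the maximality of $D$ as a complement of $C$) before invoking Lemma \ref{lm-10}. The only difference is cosmetic: you phrase the essentiality step as a contradiction, while the paper argues it directly for an arbitrary nonzero submodule $M$ (splitting into the cases $M\subseteq D$ and $M\not\subseteq D$), which is just the contrapositive of your argument.
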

\begin{proof}
Let us suppose that the pair $(C,D)$ is LCP. Therefore, $C\cap D=R^n.$ and $C\cap D=\{0\}.$ Then by the Lemma \ref{lm-4.2}, $D$ is complement of $C$. Since $C\oplus D=R^n,$ it follows that $C\oplus D$ is a complement of $\{0\}.$ From Definition \ref{de-4.2}, we get $C\oplus D$ is closed.\\
Conversely, let us assume that $D$ is a complement of $C$ such that $C\oplus D$ is closed. We shall show that $C\oplus D$ is an essential submodule of $R^n.$ Let $M$ be a non-zero right $R$-submodule of $R^n.$ We have to see that $(C\oplus D)\cap M\neq\{0\}.$ This is clear if $M\subseteq D.$ Otherwise, the maximality condition of $D$ that $C\cap (D+M)\neq \{0\},$ which implies there exists a non-zero $x\in C\cap (D+M),$ this gives that $x=d+m,$ where $d\in D$ and $m\in M.$ Hence $m(\neq 0)\in (C\oplus D)\cap M,$ as $C\cap D=\{0\}.$ By the hypothesis, $C\oplus D$ is a closed. From the Lemma \ref{lm-10}, we have $C\oplus D=R^n.$ This gives that the pair $(C, D)$ forms an LCP codes over $R.$
\end{proof}
\begin{thm}
 Let $C$ and $D$ be two linear codes in $R^n$ such that $C\oplus D$ is closed. Then the following are equivalent
 \begin{itemize}
 \item[(1)] the pair $(C, D)$ is LCP;
 \item[(2)] $C\oplus D$ is an essential submodule of $R^n$;
 \item[(3)] $0$ is a complement of $C\oplus D.$
 \end{itemize}
\end{thm}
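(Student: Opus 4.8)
The plan is to establish the two equivalences (1)$\Leftrightarrow$(2) and (2)$\Leftrightarrow$(3), after which the three-way equivalence follows. Throughout I would keep in mind that the notation $C\oplus D$ already encodes the standing hypothesis $C\cap D=\{0\}$, so that the assertion ``$(C,D)$ is LCP'' is equivalent to the single equality $C\oplus D=R^n$. This is precisely what allows the module-theoretic conditions (2) and (3) to recover the combinatorial LCP condition.

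For (1)$\Rightarrow$(2), I would simply invoke Lemma \ref{lm-4.1}: if $(C,D)$ is LCP then $C\oplus D=R^n$, and $R^n$ is trivially essential in itself, since $R^n\cap M=M\neq\{0\}$ for every non-zero submodule $M$. The reverse implication (2)$\Rightarrow$(1) is where the standing hypothesis of the theorem enters. Here $C\oplus D$ is assumed closed, and by (2) it is also essential in $R^n$. These are exactly the two hypotheses of Lemma \ref{lm-10}, which forces $C\oplus D=R^n$; since $C\cap D=\{0\}$ by assumption, the pair $(C,D)$ is LCP.

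For (2)$\Leftrightarrow$(3), I would appeal to the remark recorded just before Definition \ref{de-4.2}, namely that a right $R$-submodule is essential in $R^n$ precisely when $0$ is a complement of it, i.e. when $0$ is maximal among submodules meeting it only in $\{0\}$. Unwinding the definitions, ``$C\oplus D$ essential'' means every non-zero submodule $M$ satisfies $(C\oplus D)\cap M\neq\{0\}$, which is exactly the statement that no non-zero submodule can enlarge $0$ while keeping trivial intersection with $C\oplus D$; that is, $0$ is a complement of $C\oplus D$. Applying this equivalence to the submodule $C\oplus D$ yields (2)$\Leftrightarrow$(3) directly.

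The argument carries almost no computational content, since the real work has been front-loaded into Lemmas \ref{lm-4.1} and \ref{lm-10} and the essential/complement remark. Consequently, the only genuine point to get right is the role of the closedness hypothesis: it is the single extra ingredient that upgrades ``$C\oplus D$ is essential'' to ``$C\oplus D=R^n$'', and without it conditions (2) and (3) would be strictly weaker than (1). I would therefore make the step (2)$\Rightarrow$(1) cite explicitly both that $C\oplus D$ is closed (hypothesis) and that it is essential (assumption (2)), since that conjunction is the crux of the whole equivalence.
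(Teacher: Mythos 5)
Your proposal is correct and follows exactly the paper's own route: (1)$\Rightarrow$(2) via Lemma \ref{lm-4.1}, (2)$\Rightarrow$(1) by combining the closedness hypothesis with Lemma \ref{lm-10}, and (2)$\Leftrightarrow$(3) by unwinding the definition of essential/complement, which the paper dismisses as trivial. Your version is in fact slightly more careful than the paper's, since you make explicit that $C\cap D=\{0\}$ is built into the hypothesis and is needed to recover the full LCP condition.
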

\begin{proof}
 $(1)\Rightarrow(2)$ It follows from Lemma \ref{lm-4.1}.\\
 $(2)\Rightarrow(1)$ By the hypothesis, $C\oplus D$ is closed and $C\oplus D$ is an essential submodule of $R^n,$ thus, by Lemma \ref{lm-10}, $(C,D)$ forms an LCD code.\\
 $(2)\Leftrightarrow(3)$ It is trivial.
\end{proof}
\begin{thm}\label{th-8}
 Let $C$ and $D$ be two linear codes in $R^n$ such that $|C||D|=|R^n|.$ Then the following are equivalent
 \begin{itemize}
 \item[(1)] the pair $(C, D)$ is LCP;
 \item[(2)] $C\oplus D$ is an essential submodule of $R^n$;
 \item[(3)] $0$ is a complement of $C\oplus D$;
 \item[(4)] $D$ is complement of $C$;
 \item[(5)] $C\oplus D$ is closed.
 \end{itemize}
\end{thm}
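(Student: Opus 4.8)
The plan is to reduce every one of the conditions (2)--(5) to the single disjointness statement $C\cap D=\{0\}$ and then let the cardinality hypothesis $|C||D|=|R^n|$ do the remaining work. The engine of the proof is the elementary counting identity for finite modules: viewing $C$ and $D$ as finite additive groups, the second-isomorphism relation $(C+D)/D\cong C/(C\cap D)$ yields
$$|C+D|\,|C\cap D|=|C|\,|D|=|R^n|.$$
Hence $|C+D|=|R^n|/|C\cap D|$, and under the hypothesis the three assertions ``$C\cap D=\{0\}$'', ``$C+D=R^n$'', and ``$(C,D)$ is LCP'' become mutually equivalent: $C\cap D=\{0\}$ forces $|C+D|=|R^n|$, hence $C+D=R^n$, while conversely $C+D=R^n$ forces $|C\cap D|=1$. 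I would isolate this as the main claim first, since it is exactly what upgrades a bare disjointness into the full decomposition $C\oplus D=R^n$.

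With this claim in hand, the forward implications from (1) are already supplied by the earlier results. First, (1)$\Rightarrow$(2) is Lemma \ref{lm-4.1}, and (2)$\Leftrightarrow$(3) is the remark that a submodule of $R^n$ is essential precisely when $\{0\}$ is its complement; this disposes of (3) as well. Next, (1)$\Rightarrow$(4) is Lemma \ref{lm-4.2}. Finally, for (1)$\Rightarrow$(5) I would observe that LCP gives $C\oplus D=R^n$, and $R^n$ is trivially closed, being the complement of $\{0\}$ in $R^n$; so $C\oplus D$ is closed.

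For the reverse direction the key observation is that each of (2),(3),(4),(5) already carries the disjointness $C\cap D=\{0\}$ with it: in (4) this is built into the definition of a complement submodule, and in (2),(3),(5) it is implicit in the direct-sum symbol $C\oplus D$ being well defined. Thus in every case one extracts $C\cap D=\{0\}$ at no cost, and the counting claim immediately promotes it to $C+D=R^n$, that is, to (1). Assembling the forward implications with these reverse ones closes all five equivalences.

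The step I expect to carry the real weight is the counting claim, or rather the conceptual point behind it: conditions (2)--(5) look strictly weaker than a direct-sum decomposition of all of $R^n$, and without a size constraint they genuinely are weaker --- this is precisely why the preceding theorem needed the extra hypothesis that $C\oplus D$ be closed together with Lemma \ref{lm-10} to force $C+D=R^n$. The hypothesis $|C||D|=|R^n|$ is what removes that gap, so the only delicate point is to invoke it at the right moment, namely to turn $C\cap D=\{0\}$ into $C+D=R^n$; the module-theoretic adjectives (essential, closed, complement) then contribute only the disjointness and the already-established forward implications.
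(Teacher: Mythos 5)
Your proposal is correct and follows essentially the same route as the paper: the forward implications invoke the same ingredients (Lemma \ref{lm-4.1}, Lemma \ref{lm-4.2}, and the observation that $R^n$ is closed as the complement of $\{0\}$), and the reverse implications rest on exactly the counting argument the paper uses for $(2)\Rightarrow(1)$ and $(5)\Rightarrow(1)$, namely that the disjointness implicit in the symbol $C\oplus D$ (or in the definition of a complement) together with $|C||D|=|R^n|$ forces $C\oplus D=R^n$. The only divergence is at condition (4): the paper proves $(4)\Rightarrow(2)$ by a maximality argument that needs no cardinality hypothesis, whereas you send (4) straight to (1) through the counting claim --- this is shorter and equally valid here, though it bypasses the fact, recorded in the paper's proof, that $(4)\Rightarrow(2)$ holds unconditionally.
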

\begin{proof}
 $(1)\Rightarrow(2)$ It follows from Lemma \ref{lm-4.1}.\\
 $(2)\Rightarrow(1)$ By the hypothesis, $|C||D|=|R^n|$ and $C\oplus D$ is a submodule of $R^n,$ thus $(C,D)$ forms an LCD code.\\
 $(2)\Leftrightarrow(3)$ It is trivial.\\
 $(1)\Rightarrow(4)$ It is follows from Lemma \ref{lm-4.2}.\\
 $(4)\Rightarrow(2)$ Let $N$ be a non-zero right $R$-submodule of $R^n.$ We have to see that $(C\oplus D)\cap N\neq\{0\}.$ This is clear if $N\subseteq D.$ Otherwise, the maximality condition of $D$ that $C\cap (D+N)\neq \{0\},$ which implies there exists a non-zero $x\in C\cap (D+N),$ this gives that $x=d+y,$ where $d\in D$ and $y\in N.$ Hence $y(\neq 0)\in (C\oplus D)\cap N,$ as $C\cap D=\{0\}.$\\
 $(1)\Rightarrow(5)$ Since $(C, D)$ is LCP, i.e., $C\oplus D=R^n.$ It gives that $C\oplus D$ is a complement of $\{0\}$ in $R^n.$ By Definition \ref{de-4.2}, $C\oplus D$ is close submodule of $R^n.$\\
 $(5)\Rightarrow(1)$ It is trivial.
\end{proof}
\begin{rem}
Let $C$ and $D$ be two linear codes over a finite non-commutative ring $R.$ If the pair $(C, D)$ satisfies the Theorem \ref{th-8}, then $(C,D)$ forms a non-free LCP codes. For example:
\end{rem}
\begin{exam}
  In Example \ref{ex-4.1}, $C=e_1R$ and $D=e_2R$ both are non-free linear codes over $R.$ One can check that the pair $(C,D)$ satisfies the all condition of Theorem \ref{th-8}. Hence $(C,D)$ is an example of non-free LCP codes over $R.$
\end{exam}

\section{Equivalent codes}
In this section, our main aim is to find conditions for a pair of LCP codes $(C, D)$ such that $D^\perp$ is equivalent to $C.$ We denote $R$ as a finite non-commutative Frobenius ring. Since $R$ is finite, any right $R$-submodule of $R^n$ is finitely generated. Let us suppose that $\{\alpha_1,\alpha_2,\dots,\alpha_k\}$ be a minimal generating set of $C.$ Let $\mathrm{G}$ be a generator matrix of $C$ and $G$ is of the form $$\mathrm{G}=\left(
{\begin{array}{cccccc}
   \alpha_1\\
   \alpha_2\\
   \vdots\\
  \alpha_k\\
   \end{array} } \right)=\left(
{\begin{array}{cccc}
   \alpha_{11} & \alpha_{12} & \cdots & \alpha_{1n}\\
   \alpha_{21} & \alpha_{22} & \cdots & \alpha_{2n}\\
   \vdots & \vdots & \ddots &\vdots\\
   \alpha_{k1} & \alpha_{k2} & \cdots & \alpha_{kn}\\
   \end{array} } \right).$$ Thus, $G$ is an $k\times n$ matrix, where $k\leq n.$ Any linear code $C$ is a right $R$-submodule of $R^n,$ with generator matrix $\mathrm{G}.$ Then $C$ can be written as follows
   $$C=\{\mathrm{G}^\top\alpha^\top~|~\alpha\in R^k\}.$$ Now we consider $e$ such that
   $$e=\left(
{\begin{array}{cccc}
   \alpha_{11} & \alpha_{12} & \cdots & \alpha_{1n}\\
   \alpha_{21} & \alpha_{22} & \cdots & \alpha_{2n}\\
   \vdots & \vdots & \ddots &\vdots\\
   \alpha_{k1} & \alpha_{k2} & \cdots & \alpha_{kn}\\
    0 & 0 & \cdots & 0\\
    \vdots & \vdots & \ddots &\vdots\\
    0 & 0 & \cdots & 0\\
    \end{array} } \right).$$ Now, $C$ can be rewritten as follows
    $$C=\{\mathrm{G}^\top\alpha^\top~|~\alpha\in R^k\}=\{e^\top\alpha^\top~|~\alpha\in R^n\}.$$
 \begin{lem}\label{lm-5.1}
  If $C=\{e^\top\alpha^\top~|~\alpha\in R^n\}$ is a linear code over $R,$ then the dual code of $C$ is $$C^\perp=\texttt{Ann}_{l}(C).$$  Moreover, $C^\perp$ is a left $R$-submodule of $R^n.$
 \end{lem}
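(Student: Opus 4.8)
The plan is to recognize that the asserted identity is, at bottom, a matter of unwinding the two definitions, and then to verify the module structure directly from the linearity properties of the inner product. First I would compare the defining conditions. By definition $C^\perp = \{v\in R^n \mid [v,c]=0 \text{ for all } c\in C\}$, and likewise $\texttt{Ann}_{l}(C) = \{x\in R^n \mid [x,c]=0 \text{ for all } c\in C\}$. Since both sets consist of exactly those vectors of $R^n$ whose inner product with every codeword of $C$ vanishes, the equality $C^\perp=\texttt{Ann}_{l}(C)$ is immediate. If one prefers to phrase this through the generator matrix, one can invoke the representation $C=\{e^\top\alpha^\top \mid \alpha\in R^n\}$ and note that $v\in C^\perp$ if and only if $[v,e^\top\alpha^\top]=0$ for all $\alpha\in R^n$; this yields a concrete linear-algebraic characterization of the orthogonal set, but it is not needed for the equality itself.

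Next I would establish the left $R$-module structure. Closure under addition follows from additivity of the inner product in its first argument: if $v,w\in C^\perp$, then $[v+w,c]=[v,c]+[w,c]=0$ for all $c\in C$. For closure under the left $R$-action, I would take $r\in R$ and $v\in C^\perp$ and compute, for each $c\in C$,
$$[rv,c]=\sum_{i=1}^n (rv_i)c_i = r\sum_{i=1}^n v_i c_i = r[v,c] = 0,$$
where the scalar $r$ factors out of the sum on the left by associativity of multiplication in $R$. Hence $rv\in C^\perp$, and $C^\perp$ is a left $R$-submodule of $R^n$.

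The only point requiring genuine care is the non-commutativity of $R$: the inner product $[\,\cdot\,,\,\cdot\,]$ is left-linear in its first slot and right-linear in its second, so scalars attach on the left for elements of $C^\perp$ and on the right for elements of $C$. This is precisely why $C$ is a right submodule while its orthogonal $C^\perp$ is a left submodule, and it is the reason the \emph{left} annihilator (rather than a right one) is the correct object to identify with $C^\perp$. I do not expect any deeper obstacle; the whole statement reduces to the two definitions together with the one-line computation above.
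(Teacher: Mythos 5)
Your proof is correct, but it takes a genuinely different route from the paper's. With the definitions as printed --- $C^\perp=\{v\in R^n~|~[v,c]=0~\forall~c\in C\}$ and $\texttt{Ann}_{l}(C)=\{x\in R^n~|~[x,c]=0~\forall~c\in C\}$ --- you rightly observe that the two sets are cut out by the very same condition, so the equality is definitional, and you then verify the left-module axioms directly via the one-line computation $[rv,c]=\sum_i (rv_i)c_i=r[v,c]=0$. The paper argues differently: it records only the inclusion $\texttt{Ann}_{l}(C)\subseteq C^\perp$ and then forces equality by counting, combining $|C||C^\perp|=|R^n|$ (quoted from \cite{Woo99} in Section 2, where the Frobenius hypothesis enters) with $|C||\texttt{Ann}_{l}(C)|=|R^n|$ (Proposition \ref{p-2.1}) to get $|C^\perp|=|R^n|/|C|=|\texttt{Ann}_{l}(C)|$. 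What each approach buys: yours is more elementary and makes transparent exactly why non-commutativity forces $C^\perp$ to be a \emph{left} submodule (left-linearity in the first slot of the inner product), whereas the paper's counting argument is the more robust one --- it would survive if $C^\perp$ were defined under a different convention (e.g., through the second slot of the inner product, or via Wood's character-theoretic duality), where the identification with the left annihilator is no longer a tautology and the Frobenius property of $R$ is genuinely needed. Two small observations consistent with both proofs: neither yours nor the paper's uses the hypothesis $C=\{e^\top\alpha^\top~|~\alpha\in R^n\}$ beyond the fact that $C$ is a right $R$-submodule of $R^n$; and the ``moreover'' clause is already remarked in Section 2 of the paper, so your explicit verification of it, while not present in the paper's proof, is sound and self-contained.
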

 \begin{proof}
  $C$ is a right $R$-submodule of $R^n.$ Now, definition of $\texttt{Ann}_{l}(C)=\{x\in R^n~|~[x,c]=0~\forall ~c\in C\},$ follows that $\texttt{Ann}_{l}(C)\subseteq C^\perp.$ From the Proposition \ref{p-2.1}, we get
  $$|C^\perp|=\dfrac{|R^n|}{|C|}=|\texttt{Ann}_{l}(C)|.$$ Hence, $C^\perp=\texttt{Ann}_{l}(C).$
 \end{proof}
 \begin{lem}\label{lm-5.2} If $e^2=e\in M_n(R)$ and $C=\{e^\top\alpha^\top~|~\alpha\in R^n\},$ then $C^\perp=\{\beta(\textbf{1}-e^\top)~|~\beta\in R^n\}.$
 \end{lem}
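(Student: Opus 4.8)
The plan is to reduce the statement to a single matrix identity via Lemma~\ref{lm-5.1} and then verify two inclusions directly. First I would use Lemma~\ref{lm-5.1} to replace $C^\perp$ by the left annihilator $\texttt{Ann}_{l}(C)$, so that $v\in C^\perp$ means $[v,c]=0$ for every $c=e^\top\alpha^\top\in C$. Expanding the inner product with $c_i=(e^\top\alpha^\top)_i=\sum_j e_{ji}\alpha_j$ gives $[v,e^\top\alpha^\top]=\sum_i v_i\sum_j e_{ji}\alpha_j=\sum_j (v e^\top)_j\,\alpha_j$. Letting $\alpha$ range over the standard unit vectors of $R^n$ (using $1\in R$) shows that this vanishes for all $\alpha$ if and only if the single equation $v e^\top=0$ holds. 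Thus the lemma is equivalent to the set identity $\{x\in R^n:\ x e^\top=0\}=\{\beta(\textbf{1}-e^\top):\ \beta\in R^n\}$, and I would prove this identity by two inclusions.

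For the inclusion $\subseteq$, suppose $x e^\top=0$. Then $x=x\textbf{1}=x(\textbf{1}-e^\top)+x e^\top=x(\textbf{1}-e^\top)$, so $x$ has the required form with $\beta=x$; this direction is purely formal and does not use $e^2=e$ at all. For the reverse inclusion $\supseteq$, I would take $v=\beta(\textbf{1}-e^\top)$ and compute $v e^\top=\beta\bigl(e^\top-(e^\top)^2\bigr)$, so that $v\in C^\perp$ would follow at once from $(e^\top)^2=e^\top$, i.e.\ from the idempotency of $e^\top$.

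The main obstacle is precisely this last point. Over a commutative ring it is immediate, since transpose is an anti-automorphism and $(e^\top)^2=(e^2)^\top=e^\top$. Over a non-commutative $R$, however, the transpose is \emph{not} an anti-homomorphism of matrix rings: comparing entries, $\bigl((e^\top)^2\bigr)_{ij}=\sum_k e_{ki}e_{jk}$ whereas $\bigl((e^2)^\top\bigr)_{ij}=\sum_k e_{jk}e_{ki}$, so the idempotency of $e^\top$ does not follow formally from $e^2=e$. I would therefore isolate $(e^\top)^2=e^\top$ as the hypothesis that must be secured, either by verifying it directly for the specific idempotent $e$ coming from the generator matrix of $C$ (whose nonzero rows $\alpha_i$ satisfy $\alpha_i e=\alpha_i$), or by running the whole argument through an anti-automorphism of $R$ compatible with the bilinear form. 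Once $(e^\top)^2=e^\top$ is available, both inclusions close and $C^\perp=\{\beta(\textbf{1}-e^\top):\beta\in R^n\}$ follows; alternatively one could replace the reverse inclusion by the cardinality count $|C^\perp|=|R^n|/|C|$ from Proposition~\ref{p-2.1} together with the easy inclusion $\subseteq$, but that route still requires $e^\top$ to split $R^n$ and so rests on the same idempotency.
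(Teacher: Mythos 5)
Your reduction and your two inclusions are exactly the skeleton of the paper's own argument --- the paper's entire proof is the single sentence ``By Lemma~\ref{lm-5.1}, we have $C^\perp=\{\beta(\textbf{1}-e^\top)~|~\beta\in R^n\},$ as $e^2=e\in M_n(R)$'' --- but you have carried out what that sentence hides, and the obstruction you isolate is real, not a defect of your write-up. Over a non-commutative $R$ the transpose is not anti-multiplicative, so $e^2=e$ does not yield $(e^\top)^2=e^\top$; and this identity is not a technicality one can route around, since (taking $\beta$ to run over the standard unit vectors) it is \emph{equivalent} to the inclusion $\{\beta(\textbf{1}-e^\top)~|~\beta\in R^n\}\subseteq C^\perp$. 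So the lemma as stated is not merely unproved in the paper; it is false.

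Concretely: let $R=M_2(\mathbb{F}_2)$, a finite non-commutative Frobenius ring, let $n=2$, and put $a=\left(\begin{smallmatrix}0&1\\0&0\end{smallmatrix}\right)$, $b=\left(\begin{smallmatrix}1&0\\0&0\end{smallmatrix}\right)$ in $R$, so that $a^2=0$, $b^2=b$, $ab=0$, $ba=a$. Set $e=\left(\begin{smallmatrix}a&b\\a&b\end{smallmatrix}\right)\in M_2(R)$. Then $e^2=e$, because the $(1,1)$ and $(1,2)$ entries of $e^2$ are $a^2+ba=a$ and $ab+b^2=b$, and the second row of $e$ repeats the first; but $e^\top=\left(\begin{smallmatrix}a&a\\b&b\end{smallmatrix}\right)$ satisfies $\bigl((e^\top)^2\bigr)_{11}=a^2+ab=0\neq a$, so $e^\top$ is not idempotent. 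Consequently, for $\beta=(1,0)$ the vector $v=\beta(\textbf{1}-e^\top)$ has $ve^\top=\beta\bigl(e^\top-(e^\top)^2\bigr)\neq 0$ (its first coordinate is $a$), so $v$ lies in the claimed description of $C^\perp$ but not in $C^\perp$. Your proposal is therefore the correct resolution of the statement: one must add the hypothesis $(e^\top)^2=e^\top$ (or take $R$ commutative), and under that hypothesis your two inclusions --- or your easy inclusion combined with the cardinality count $|C^\perp|=|R^n|/|C|$ from Proposition~\ref{p-2.1} --- complete the proof. Note also that the same unjustified identity $(\alpha e)^\top=e^\top\alpha^\top$ is invoked in the paper's proof of the theorem immediately following this lemma, so the gap you found propagates beyond the lemma itself.
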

 \begin{proof}
   By the Lemma \ref{lm-5.1}, we have $C^\perp=\{\beta(\textbf{1}-e^\top)~|~\beta\in R^n\},$ as $e^2=e\in M_n(R).$
 \end{proof}
\begin{thm} Let $C$ and $D$ be two linear codes over $R$ such that the pair $(C,D)$ forms an LCP. If there exists an idempotent element $e\in M_n(R)$ such that $C=\{e^\top\alpha^\top~|~\alpha\in R^n\}$ and $D$ is equivalent to $\{(\textbf{1}-e)\beta^\top~|~\beta\in R^n\}.$ Then $C^\perp$ is equivalent to $D.$ Moreover, $D^\perp$ is equivalent to $C.$
\end{thm}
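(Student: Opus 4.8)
The plan is to reduce the statement to Lemma~\ref{lm-5.2} and then transport codewords by transposition. Since $e$ is idempotent and $C=\{e^\top\alpha^\top\mid\alpha\in R^n\}$, Lemma~\ref{lm-5.2} supplies the explicit description $C^\perp=\{\beta(\mathbf{1}-e^\top)\mid\beta\in R^n\}$. Writing $\mathbf{1}-e^\top=(\mathbf{1}-e)^\top$, every codeword of $C^\perp$ is a row vector $\beta(\mathbf{1}-e)^\top$, while every codeword of the reference code $E:=\{(\mathbf{1}-e)\beta^\top\mid\beta\in R^n\}$ is a column vector $(\mathbf{1}-e)\beta^\top$. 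I would first argue that the coordinatewise transposition map $v\mapsto v^\top$ carries $C^\perp$ onto $E$, so that $C^\perp$ and $E$ are the same code up to this identification; combining with the hypothesis $D\sim E$ and transitivity of equivalence then yields $C^\perp\sim D$.

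For the ``moreover'' claim I would not repeat the computation but invoke duality instead. Over a Frobenius ring the cardinality relation $|C|\,|C^\perp|=|R^n|$ of Proposition~\ref{p-2.1}, together with the trivial inclusion $C\subseteq C^{\perp\perp}$, forces $C^{\perp\perp}=C$. Granting that the monomial equivalence used throughout is compatible with the dual operation --- a permutation of coordinates acts identically on the dual, and scaling the $i$-th coordinate by a unit is undone in the dual by scaling with its inverse on the opposite side --- the relation $C^\perp\sim D$ can be dualized to give $C=C^{\perp\perp}\sim D^\perp$, which is exactly the conclusion $D^\perp\sim C$.

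The main obstacle is the transposition step, and it is genuinely where non-commutativity bites. The $i$-th entry of $\beta(\mathbf{1}-e^\top)$ is $\beta_i-\sum_k\beta_k e_{ik}$, whereas the $i$-th entry of $(\mathbf{1}-e)\beta^\top$ is $\beta_i-\sum_k e_{ik}\beta_k$; the free scalars sit on opposite sides of the entries $e_{ik}$. Equivalently, the identity $(\beta(\mathbf{1}-e^\top))^\top=(\mathbf{1}-e)\beta^\top$ is precisely the rule $(AB)^\top=B^\top A^\top$, which fails over a non-commutative ring, so that $C^\perp$ (the left-module spanned by the columns of $\mathbf{1}-e$) and $E$ (the right-module spanned by the same columns) need not coincide as sets of tuples. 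To make the argument rigorous one must either restrict to idempotents $e$ whose entries are central, so that the two spans agree, or read one of the two codes inside the opposite ring $R^{\mathrm{op}}$ and build this anti-isomorphism into the notion of equivalence. I would state the required hypothesis explicitly and verify that, under it, transposition is a support-preserving bijection and hence a legitimate code equivalence; the remaining points (well-definedness and the compatibility of $\perp$ with equivalence) are then routine.
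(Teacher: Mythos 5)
Your proposal follows essentially the same route as the paper's own proof: invoke Lemma~\ref{lm-5.2} to write $C^\perp=\{\beta(\mathbf{1}-e^\top)\mid\beta\in R^n\}$, pass to $\{(\mathbf{1}-e)\beta^\top\mid\beta\in R^n\}$ by transposition, and conclude by transitivity of equivalence using the hypothesis $D\sim\{(\mathbf{1}-e)\beta^\top\mid\beta\in R^n\}$. The difference is that you stopped to examine the transposition step, and you are right to do so: the identity $(\alpha e)^\top=e^\top\alpha^\top$, entrywise $\sum_i\alpha_ie_{ij}=\sum_ie_{ij}\alpha_i$, is precisely what the paper asserts without justification (``Since $(\alpha e)^\top=e^\top\alpha^\top$''), and it genuinely fails over a non-commutative ring unless the entries of $e$ commute with the scalars involved (e.g.\ are central), or unless one reads one of the two codes over the opposite ring. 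The related mismatch you note --- $C^\perp$ is a \emph{left} $R$-submodule while $D$ is a \emph{right} $R$-submodule, so an ``equivalence'' between them must itself be defined with care --- is likewise passed over in the paper. In other words, the obstacle you flagged is not an artifact of your write-up; it is a gap in the paper's own argument, and your proposed repairs (central entries of $e$, or building the anti-isomorphism into the notion of equivalence) are the natural ways to close it. Finally, the paper offers no argument at all for the ``moreover'' claim: its proof ends with $C^\perp\sim D$. Your double-dual argument --- $C^{\perp\perp}=C$ from Proposition~\ref{p-2.1} (applied once to the right submodule $C$ and once, in its left/right-annihilator form valid over a Frobenius ring, to the left submodule $C^\perp$) together with the inclusion $C\subseteq C^{\perp\perp}$, then dualizing $C^\perp\sim D$ --- is a genuine completion of that part, modulo the same care about how monomial equivalence interacts with dualization on opposite sides.
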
\label{th-9}
 \begin{proof}
  Since $e$ is idempotent element in $M_n(R),$ and $C=\{(e^\top\beta^\top~|~\beta\in R^n\},$ then by Lemma \ref{lm-5.2}, $C^\perp=\{\delta(\textbf{1}- e^\top)~|~\delta\in R^n\}.$ Since $(\alpha e)^\top=e^\top \alpha^\top.$ Thus, $C^\perp$ is equivalent to $D.$
\end{proof}
\begin{rem}
  In Theorem \ref{th-9}, we see that if $C$ and $D$ be two linear code over finite non-commutative Frobenius ring $R$ such that $C=\{e^\top\alpha^\top~|~\alpha\in R^n\}$ and $D=\{(\textbf{1}-e)\beta^\top~|~\beta\in R^n\},$ where $e^2=e.$ Then $C^\perp$ is equivalent to $D.$ If we replace a finite non-commutative Frobenius ring with a finite commutative Frobenius ring. The proof is similar. Moreover, Theorem \ref{th-9} is applicable for $R$ is a finite field and also for a finite chain ring or finite local ring.
\end{rem}
\begin{exam}
 Let $C$ and $D$ be two linear code over $R$ (Define in Example \ref{ex-3.1}) and $q=3$ of length $3$ with generator matrices $$\mathrm{G}=\left(
{\begin{array}{cccc}
   \textbf{1} & \textbf{2} & \textbf{0} \\
   \textbf{0} & \textbf{1} & \textbf{2} \\
   \end{array} } \right),~~~~~\mathrm{G}_2=\left(
{\begin{array}{cccc}
   \textbf{1} & \textbf{2} & \textbf{1} \\
   \end{array} } \right),$$  respectively, where $\textbf{1}=\left({\begin{array}{cccc}
   1 &  0 \\
   0 & 1 \\
   \end{array} } \right)\in R$ and $\textbf{2}=\left({\begin{array}{cccc}
   2 &  0 \\
   0 & 2 \\
   \end{array} } \right)\in R$. It is easy to see that $C\oplus D=R^n.$ Let us consider $$e=\left(
{\begin{array}{cccc}
   \textbf{1} &  \textbf{0} & \textbf{2}\\
   \textbf{0} & \textbf{1} & \textbf{2}\\
   \textbf{0} & \textbf{0} & \textbf{0}\\
   \end{array} } \right)$$ such that $e^2=e\in M_3(R)$ and $C=\{e^\top\alpha^\top~|~\alpha\in R^n\}$ and $D$ is equivalent to $\{(1-e)\beta^\top~|~\beta\in R^n\}.$ By the Theorem \ref{th-9}, we obtain $C^\perp$ is equivalent to $D.$ Moreover, $D^\perp$ is equivalent to $C.$
\end{exam}
\begin{exam}
    Let $C$ and $D$ be two linear code over $R$ (Define in Example \ref{ex-4.1}) and $q=3$ of length $4$ with generator matrices $$\mathrm{G}_1=\left(
{\begin{array}{cccc}
   \textbf{1} & \textbf{0} & \textbf{1} & \textbf{1} \\
   \textbf{0} & \textbf{1} & \textbf{1} & \textbf{1} \\
   \end{array} } \right),~~~~~\mathrm{G}_2=\left(
{\begin{array}{cccc}
   \textbf{1} & \textbf{1} & \textbf{1} & \textbf{0}\\
   \textbf{1} & \textbf{1} & \textbf{0} & \textbf{1}\\
   \end{array} } \right),$$  respectively. It is easy to see that $C\oplus D=R^n.$ Let us consider $$e=\left(
{\begin{array}{cccc}
   \textbf{1} &  \textbf{0} & \textbf{1} & \textbf{1}\\
   \textbf{0} & \textbf{1} & \textbf{1} &  \textbf{1}\\
   \textbf{0} & \textbf{0} & \textbf{0} & \textbf{0}\\
   \textbf{0} & \textbf{0} & \textbf{0} & \textbf{0}\\
   \end{array} } \right)$$ such that $e^2=e\in M_4(R)$ and $C=\{e^\top\alpha^\top~|~\alpha\in R^n\}$ and $D$ is equivalent to $\{(\textbf{1}-e)\beta^\top~|~\beta\in R^n\}.$ By the Theorem \ref{th-9}, we obtain $C^\perp$ is equivalent to $D.$ This gives $d(C^\perp)=d(D)$ Moreover, $D^\perp$ is equivalent to $C.$ i.e., $d(D^\perp)=d(C).$
\end{exam}
\section*{Acknowledgements} The first author would like to thank the National Institute of Science Education and Research Institute for its hospitality and support. The authors sincerely thank the referees for their careful reading and suggestions for improving our manuscript.

\section*{Declaration of Interest Statement} The authors declare that they have no known competing financial interests or personal relationships that could have appeared to influence the work reported in this article.

\end{document}